\title{Continual Depth-limited Responses for Computing Counter-strategies in Sequential Games}
\author{David Milec}
\affiliation{
  \institution{AI Center, FEE, CTU in Prague}
  \country{Czech Republic}}    
\email{milecdav@fel.cvut.cz}
\author{Ondřej Kubíček}
\affiliation{
  \institution{AI Center, FEE, CTU in Prague}
  \country{Czech Republic}}
\email{kubicon3@fel.cvut.cz}
\author{Viliam Lisý}
\affiliation{
  \institution{AI Center, FEE, CTU in Prague}
  \country{Czech Republic}}
\email{viliam.lisy@agents.fel.cvut.cz}
\begin{abstract}
% Limited look-ahead game solving in imperfect-information games allows defeating expert humans in large Poker, Liar's dice, or Scotland Yard. The existing algorithms of this type assume all players are perfectly rational. As a result, even a very weak opponent can only tie or lose slowly against these powerful methods. Subsequently, opponent modeling and exploitation are desirable when playing weaker opponents, which is a well-explored problem in smaller games. Recent work attempted to exploit opponent models in large games using limited look-ahead solving. We show flaws of the previous work and present new algorithms with theoretical guarantees and significantly better performance. We use only a single (optimal) value function from standard limited look-ahead solving to exploit any model efficiently. We analyze existing resolving gadgets in model exploitation and show fundamental issues of the gadgets in this setting. We propose a new full gadget, which solves these issues. Finally, we experimentally evaluate our methods on three games and show that our algorithm achieves over half of the maximum possible exploitation without risking almost any loss.
% \end{abstract}
\begin{abstract}
In zero-sum games, the optimal strategy is well-defined by the Nash equilibrium. However, it is overly conservative when playing against suboptimal opponents and it can not exploit their weaknesses. Limited look-ahead game solving in imperfect-information games allows superhuman play in massive real-world games such as Poker, Liar's Dice, and Scotland Yard. However, since they approximate Nash equilibrium, they tend to only win slightly against weak opponents. We propose theoretically sound methods combining limited look-ahead solving with an opponent model, in order to 1) approximate a best response in large games or 2) compute a robust response with control over the robustness of the response. Both methods can compute the response in real time to previously unseen strategies. We present theoretical guarantees of our methods. We show that existing robust response methods do not work combined with limited look-ahead solving of the shelf, and we propose a novel solution for the issue. Our algorithm performs significantly better than multiple baselines in smaller games and outperforms state-of-the-art methods against SlumBot.
\end{abstract}
\keywords{large games, approximating best response, robust response, opponent exploitation, imperfect information, depth limited solving, gadgets}
\DeclareMathOperator*{\argmax}{arg\,max}
\begin{document}

\pagestyle{fancy}
\fancyhead{}

% Environment styles
%\newtheorem{theorem}{Theorem}
%\newtheorem{example}{Example}
%\newtheorem{definition}{Definition}
\newtheorem{observation}{Observation}

\newcommand{\Iset}{\mathcal{I}}
\newcommand{\pr}{1}
\newcommand{\ps}{2}
\newcommand{\sbr}{\mathcal{B}}
\newcommand{\srnr}{\mathcal{R}}
\newcommand{\partition}{\mathcal{P}}
\def\real{\mathbb{R}}

\newcommand\todo[1]{\textcolor{red}{TODO: #1}}

\def\printappendix{1}
\def\printmain{1}
\def\printacknowledgments{1}
\if\printmain1

    \maketitle

    \section{Introduction}
% Motivace
We can not enumerate all the decision points in large games, which makes computing optimal strategy, a Nash equilibrium (NE) in two-player zero-sum games, infeasible. A breakthrough that allowed approximating the NE and defeating human experts in several large imperfect-information games is limited look-ahead solving or search, which adapts the well-known approach from perfect-information games to games with imperfect-information \cite{moravvcik2017deepstack,brown2019superhuman,schmid2021player}. Limited look-ahead solving takes advantage of decomposition. It iteratively builds the game to some depth and solves a small part of the game while summarising the required values from the rest of the game by a value function. The value function is commonly learned using neural networks. When the algorithms solve the game step by step, it is called continual depth-limited solving or continual resolving.

The vast majority of theoretically sound, continual depth-limited solving algorithms assume perfect rationality of the opponent and do not allow explicit modeling of an opponent and exploitation of the opponent's mistakes. As a result, even very weak opponents exploitable by the heuristic local best response (LBR) \cite{lisy2017eqilibrium} can tie or lose very slowly against these methods \cite{zarick2020unlocking}. Therefore, there has been a significant amount of work towards computing strategies to use against imperfect opponents to create AI systems that would perform well in the real world, for example, against humans \cite{bard2013online,wu2021l2e,southey2012bayes,mealing2015opponent,korb2013bayesian,milec2021complexity,johanson2009data}.

% Better description of the problem
The opponent modeling and exploitation process consists of two steps: opponent modeling and model exploitation. Opponent modeling requires building a model from previous data or actions observed during an online play. Model exploitation is finding a good strategy against the given model and is the main focus of this paper. In smaller games, we can trivially compute a best response to exploit the opponent maximally, or we can use methods to compute robust responses \cite{johanson2008computing, johanson2009data} if there is uncertainty in the model and we want to be safer, meaning we want to limit the possible loss when facing the worst-case adversary. However, even the best response (BR) computation in large games is non-trivial, and currently, no approach can compute it while interacting in real-time.

% Full exploitation
This work explores the full model exploitation and proposes continual depth-limited best response (CDBR). CDBR relies on the value function used in the standard limited look-ahead solving, and we prove theoretical guarantees on the performance. A drawback of using the same value function is decreased performance, and we could improve CDBR by training a specific value function for a particular opponent model. However, it would be impractical since the training is expensive. Furthermore, in cases where we learn the opponent model in real-time interaction and update it after each step, it would be impossible.

% Safe exploitation
The best response and CDBR are useful, e.g., for evaluating the quality of strategies, but they are brittle in game play. We can lose significantly when facing an opponent different from the expected model. In the real world, we will never have exact models, which makes BR and CDBR impractical for game play. To address the issue, robust responses are used \cite{ganzfried2015safe, johanson2008computing, johanson2009data}. They introduce a notion of safety, and the safety criterion requires the response to stay close to the NE. In other words, only to lose a limited amount to the worst-case adversary. Trivially, we can compute both BR and NE and create a linear combination where we can control the safety by a parameter. However, previous work shows that we can perform significantly better and recover the whole Pareto set of maximally exploiting strategies with maximal safety \cite{johanson2008computing}. We adapt the method to limited look-ahead solving, creating a continual depth-limited restricted Nash response (CDRNR). Similarly to the full robust response, CDRNR significantly outperforms the linear combination. However, it comes with drawbacks in the limited look-ahead solving. Namely, we need to keep the previously solved subgames as a path to the root to ensure theoretical soundness, which linearly increases the size of the game solved each step, making it scalable to games with low depth like Poker or Goofspiel but impractical in games with high depth.

% Our contribution
Our contributions are: \textbf{1)}  We formulate the algorithms to find the responses given the opponent strategy and an evaluation function. This results in the best performing theoretically sound robust response applicable to large games. \textbf{2)} We prove the soundness of the proposed algorithms. \textbf{3)} We provide an analysis of problems that arise when using opponent models in limited look-ahead solving and propose a solution we call a full gadget. \textbf{4)} We empirically evaluate the algorithms on poker and goofspiel variants and compare them to multiple baselines. We show that our responses exploit the opponents, and CDBR outperforms domain-specific local best response \cite{lisy2017eqilibrium} on poker. We also compare CDBR with the approximate best response (ABR) on smaller games and on full Heads-up No-Limit Texas Hold'em (HUNL), where we exploit SlumBot significantly more than ABR.
    \section{Background}
\label{sec:background}
A two-player extensive-form game (EFG) consists of a set of players $N = \{\pr,\ps,c\}$, where $c$ denotes the chance, $\pr$ is the maximizer and $\ps$ is the minimizer, a finite set $A$ of all actions available in the game, a set $H \subset \{a_1 a_2 \cdots a_n \mid a_j \in A, n \in \mathbb{N}\}$ of histories in the game. We assume that $H$ forms a non-empty finite prefix tree. We use $g \sqsubset h$ to denote that $h$ extends $g$. The \textit{root} of $H$ is the empty sequence $\emptyset$. The set of leaves of $H$ is denoted $Z$, and its elements $z$ are called \textit{terminal histories}. The histories not in Z are \textit{non-terminal histories}. By $A(h) = \{a \in A \mid ha \in H\}$, we denote the set of actions available at $h$. $P : H \setminus Z \to N$ is the \textit{player function} which returns who acts in a given history. Denoting $H_i = \{h \in H \setminus Z \mid P(h) = i\}$, we partition the histories as $H = H_\pr \cup H_\ps \cup H_c \cup Z$. $\sigma_c$ is the \textit{chance strategy} defined on $H_c$. For each $h \in H_c, \sigma_c(h)$ is a fixed probability distribution over $A(h)$. Utility functions assign each player utility for each leaf node, $u_i : Z \to \mathbb{R}$. The game is zero-sum if $\forall z \in Z: \quad u_\pr(z) + u_\ps(z) = 0$. In the paper, we assume all the games are zero-sum. The game is of \textit{imperfect information} if all players do not fully observe some actions or chance events. The information structure is described by \textit{information sets} for each player $i$, which forms a partition $\Iset_i$ of $H_i$. For any information set $I_i \in \Iset_i$, any two histories $h, h' \in I_i$ are indistinguishable to player $i$. Therefore $A(h) = A(h')$ whenever $h, h' \in I_i$. For $I_i \in \Iset_i$ we denote by $A(I_i)$ the set $A(h)$ and by $P(I_i)$ the player $P(h)$ for any $h \in I_i$.

A \textit{strategy} $\sigma_i \in \Sigma_i$ of player $i$ is a function that assigns a distribution over $A(I_i)$ to each $I_i \in \Iset_i$. A \textit{strategy profile} $\sigma = (\sigma_\pr, \sigma_\ps)$ consists of strategies for both players. $\pi^\sigma(h)$ is the probability of reaching $h$ if all players play according to $\sigma$. We can decompose $\pi^\sigma(h) = \prod_{i \in N}\pi^\sigma_i(h)$ into each player's contribution. Let $\pi^\sigma_{-i}$ be the product of all players' contributions except that of player $i$ (including chance). For $I_i \in \Iset_i$ define $\pi^\sigma(I_i) = \sum_{h \in I_i}\pi^\sigma(h)$, as the probability of reaching information set $I_i$ given all players play according to $\sigma$. $\pi_i^\sigma(I_i)$ and $\pi_{-i}^\sigma(I_i)$ are defined similarly. Finally, let $\pi^\sigma(h,z) = \frac{\pi^\sigma(z)}{\pi^\sigma(h)}$ if $h \sqsubset z$, and zero otherwise. $\pi^\sigma_i(h,z)$ and $\pi^\sigma_{-i}(h,z)$ are defined similarly. Using this notation, \textit{expected payoff} for player $i$ is $u_i(\sigma) = \sum_{z \in Z}u_i(z)\pi^\sigma(z)$. A \textit{best response} (BR) of player $i$ to the opponent's strategy $\sigma_{-i}$ is a strategy $\sigma_i^{BR} \in BR_i(\sigma_{-i})$, where $u_i(\sigma_i^{BR}, \sigma_{-i}) \geq u_i(\sigma'_i, \sigma_{-i})$ for all $\sigma'_i \in \Sigma_i$. A tuple of strategies $( \sigma_i^{NE}, \sigma_{-i}^{NE})$, $\sigma_{i}^{NE}\in\Sigma_{i}, \sigma_{-i}^{NE}\in\Sigma_{-i}$ is a \emph{Nash Equilibrium} (NE) if $\sigma_i^{NE}$ is an optimal strategy of player $i$ against strategy $\sigma_{-i}^{NE}$. Formally: $\sigma_i^{NE} \in BR(\sigma_{-i}^{NE})\quad\forall i\in \{\pr,\ps \}$.

In a two-player zero-sum game, the \textbf{exploitability} of a strategy is the expected utility a fully rational opponent can achieve above the value of the game. Formally, exploitability $\mathcal{E}(\sigma_i)$ of strategy $\sigma_i\in\Sigma_i$ is
$
    \mathcal{E}(\sigma_i) =  u_{-i}(\sigma_i, \sigma_{-i}) - u_{-i}(\sigma^{NE}), \quad \sigma_{-i} \in BR_{-i}(\sigma_i).
$

\textbf{Safety} is defined based on exploitability and $\epsilon$-safe strategy is a strategy which has exploitability at most $\epsilon$.

We define \textbf{gain} of a strategy against a model as the expected utility we receive above the value of the game. We formally define the gain $\mathcal{G}(\sigma_i, \sigma_{-i})$ of the strategy $\sigma_i$ against a strategy $\sigma_{-i}$ as
$
\mathcal{G}(\sigma_i, \sigma_{-i}) = u_i(\sigma_i, \sigma_{-i}) - u_{i}(\sigma^{NE}).
$

\begin{figure}
    \centering
    \includegraphics[width=\linewidth]{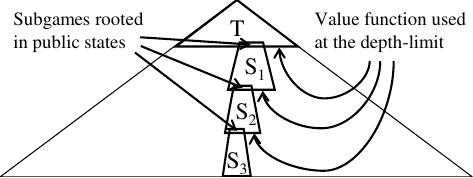}
    \caption{Illustration of the depth-limited solving.}
    \label{fig:subgames}
\end{figure}

\paragraph{Depth-limited Solving - Figure~\ref{fig:subgames}}
We denote $H_i(h)$ the sequence of player $i$'s information sets and actions on the path to a history $h$. Two histories $h, h'$ where player $i$ does not act are in the same \textit{augmented information set} $I_i$ if $H_i(h) = H_i(h')$. We partition the game histories into \textbf{public states} $PS \subset H$, which are closed under the membership within the augmented information sets of all players. \textbf{Trunk} is a set of histories $T \subset H$, closed under prefixes and public states. \textbf{Subgame} $S \subset H$ is a forest of trees with all the roots starting in one public state. It is closed under public states, and the trees can end in terminal public states or often end after a number of moves or rounds in the game. \textbf{Range} of a player $i$ is a probability distribution over his information sets in some public state $PS_i$, given we reached the $PS_i$. \textbf{Value function} is a function that takes the public state and both players' ranges as input and outputs values for each information set in the public state for both players. We assume using an approximation of an \textbf{optimal value function}, which is a value function returning the values of using some NE after the depth-limit. \textbf{Subgame partitioning} $\mathcal{P}$ is a partitioning that splits the game into trunk and subgames into multiple different levels based on some depth-limit or other factors (domain knowledge). Subgame partitioning can be naturally created using the formalism of factored-observation stochastic games \cite{kovavrik2019value}. By $u_i(\sigma)^T_V$, we denote the utility for player $i$ if we use strategy profile $\sigma$ in trunk $T$ and compute values at the depth-limit using value function $V$. When resolving a subgame with just the ranges, there are no guarantees on the resulting exploitability of the strategy in the full game, and the exploitability can rise significantly \cite{burch2014solving}. To address the issue, artificially constructed games called \textbf{gadgets} are used to limit the increase in exploitability. They do it by adding nodes to the top of the subgame, which simulates that the opponent is allowed to deviate from its strategy in an already solved game.

Figure~\ref{fig:cdbr_example_game} shows a simple game illustrating depth-limited solving. The game starts with player~$\ps$ choosing to either play standard biased matching pennies ($p$) or playing his own version of the game ($q$). In the next round, player $\pr$ does not know which game player $\ps$ chose, and he chooses head (H) or tail (T). Then player $\ps$ guesses head ($h_i$) or tail($t_i$), and if he chooses to play the standard version, he receives 2 when correctly guessing head and 1 when correctly guessing tails. Otherwise, the reward is 0. In the modified version, guessing incorrectly gives 2 to the player $\ps$, and guessing correctly gives 1 for heads and -1 for tails.

In the Nash equilibrium of this game, player $\pr$ plays heads with probability $\frac{2}{3}$ and player $\ps$ chooses his own version of the game with probability $\frac{2}{3}$ and follows with only heads. Public states in this game are always the whole levels (rows) since the actions are never observable by both players. When we start depth-limited solving, we create a trunk, which we select as just the root with the choice to play $q$ or $p$. We start solving the trunk using an iterative algorithm, e.g., counterfactual regret minimization (CFR) \cite{zinkevich2008regret}.

We initialize strategy to uniform, which gives us \textit{range} in the next public state ($\frac{1}{2}, \frac{1}{2}$). We give the \textit{range} to the value function, which returns values as if we played equilibrium in the rest of the game. Value function gives us values in the information sets, which translates to the utility of $-\frac{5}{3}$ for playing heads and $-\frac{2}{3}$ for playing tails. We use the values to update regrets in the CFR and perform the next iteration similarly. When we solve the trunk and recover the equilibrium strategy for the first node, we move to a subgame, for example, a game starting in the information set of player $\pr$ and ending after his action. We need to reconstruct what happened earlier. If we replace the already computed strategy with a chance node, which is called unsafe resolving, we are not guaranteed to recover the equilibrium for player $\pr$. Unsafe resolving can produce solutions ranging from heads with probability $\frac{3}{4}$ to $\frac{1}{3}$ but the only equilibrium is heads with probability $\frac{2}{3}$. The situation is fixed using the mentioned gadgets, which allow the opponent to modify their range above the subgame, forcing the other player to play robustly against all the possible ranges and recover the equilibrium.

\def\xscale{1.1}
\begin{figure}
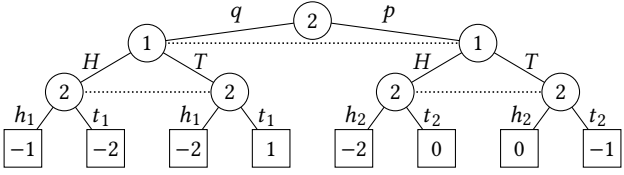

    \begin{istgame}
        \setistOvalNodeStyle{.5cm}
        \setistRectangleNodeStyle{.5cm}
        \xtShowTerminalNodes[rectangle node]
        \istrooto(0){$\ps$}+{3mm}..{4*\xscale cm}+
            \istb{q}[left, xshift=4mm, yshift=2.7mm]  \istb{p}[right, xshift=-4mm, yshift=2.7mm]  \endist
        \istrooto(1)(0-1){$\pr$}+{6.5mm}..{2*\xscale cm}+
            \istb{H}[left, xshift=1.5mm, yshift=1.3mm]  \istb{T}[right, xshift=-1.5mm, yshift=1.3mm]  \endist
        \istrooto(2)(0-2){$\pr$}+{6.5mm}..{2*\xscale cm}+
            \istb{H}[left, xshift=1.5mm, yshift=1.3mm]  \istb{T}[right, xshift=-1.5mm, yshift=1.3mm]  \endist
        \istrooto(3)(1-1){$\ps$}+{7.5mm}..{\xscale cm}+
            \istbt{h\textsubscript{1}}[left, yshift=0.5mm]{-1}[center]  \istbt{t\textsubscript{1}}[right, yshift=0.5]{-2}[center]  \endist
        \istrooto(4)(1-2){$\ps$}+{7.5mm}..{\xscale cm}+
            \istbt{h\textsubscript{1}}[left, yshift=0.5mm]{-2}[center]  \istbt{t\textsubscript{1}}[right, yshift=0.5]{1}[center]  \endist
        \istrooto(5)(2-1){$\ps$}+{7.5mm}..{\xscale cm}+
            \istbt{h\textsubscript{2}}[left, yshift=0.5mm]{-2}[center]  \istbt{t\textsubscript{2}}[right, yshift=0.5]{0}[center]  \endist
        \istrooto(6)(2-2){$\ps$}+{7.5mm}..{\xscale cm}+
            \istbt{h\textsubscript{2}}[left, yshift=0.5mm]{0}[center]  \istbt{t\textsubscript{2}}[right, yshift=0.5]{-1}[center]  \endist
        \xtInfoset(0-1)(0-2)
        \xtInfoset(1-1)(1-2)
        \xtInfoset(2-1)(2-2)
    \end{istgame}    
    \caption{Simple zero-sum imperfect-information game. Nodes denote the decisions of the players, dotted lines mark information sets, and the leaf shows for player \pr}
    \label{fig:cdbr_example_game}
\end{figure}
    \section{Fully Exploiting the Opponent}
Fully exploiting opponent models in small games boils down to computing a best response. This is infeasible in games with an intractable number of information sets for which we use the continual depth-limited solving algorithms. The depth-limited setting does not allow computing BR in one pass anymore. The game we already saw in Figure~\ref{fig:cdbr_example_game} can be an example of that. Suppose we know the player $\ps$ always makes a mistake in the first move and plays only to the standard biased matching pennies. If we knew his strategy of guessing heads or tails, we could compute a best response. However, our trunk will end before the choice, and we need to use a value function. Since the value function in this simple case is just a best response of the opponent, the problem is reduced to finding the optimal strategy against a best response, which corresponds to finding NE, and it can not be solved in one pass. In this section, we propose an algorithm for continual depth-limited best response (CDBR), which generalizes a best response to be used with a value function for depth-limited solving.

\subsection{Continual Depth-limited Best Response}
Given any extensive-form game $G$ with perfect recall, opponent's fixed strategy $\sigma_\ps^F$ and some subgame partitioning $\partition$, we define continual depth-limited best response (CDBR) recursively from the top, see Figure~\ref{fig:subgames}. First, we have trunk $T_1 = T$ and value function $V$. CDBR in the trunk $T_1$ for player $\pr$ with value function $V$ is defined as $\sbr(\sigma_\ps^F)_V^{T_1} = \argmax_{\sigma_\pr}u_\pr(\sigma_\pr, \sigma_\ps^F)_V^{T_1}$. In other words, we maximize the utility over the strategy in the trunk, where we return values from the value function after the depth limit. In each step afterward, we create a new subgame $S_i$ and create new trunk by joining the old one with the subgame, creating $T_i = T_{i-1} \cup S_i$. We fix the strategy of player $\pr$ in the $T_{i-1}$ and maximize over the strategy in the subgame. $\sbr(\sigma_\ps^F)_V^{T_i} = \argmax_{(\sigma_\pr^{S_i})}u_\pr(\sigma_\pr^{S_i} \cup \sigma_\pr^{T_{i-1}}, \sigma_\ps^F)_V^{T_i}$. We continue like that for each step, and we always create a new trunk $T_i$ using the strategy from step $T_{i-1}$ until we reach the end of the game. We denote the full CDBR strategy created by joining strategies from all possible branches $\sbr(\sigma_\ps^F)_V^\mathcal{P}$.

Intuitively, we always solve the game until the depth limit. The opponent is fixed everywhere above the depth limit, and the rational player is fixed in the already solved parts, and she can play in the part that was added last. Looking at Figure~\ref{fig:subgames} CDBR in $S_2$ would allow player $\pr$ to play in $S_2$, it would replace anything bellow $S_2$ with a value function, player $\pr$ would be fixed in $T$ and $S_1$ and player $\ps$ would be fixed in $T, S_1$ and $S_2$.

\subsubsection{Computing CDBR and the complexity}
In practice, we will compute CDBR similarly to depth-limited solving with a few key changes. First, we fix the opponent's strategy in the currently resolved part of the game to allow the player to respond to it, which corresponds to the argmax from the definition. Another key change that simplifies the algorithm is that we no longer need a gadget since the opponent is fixed in the parts we already played through, so we do not need to be robust against different ranges than the one taken from the opponent model.

The difference from the standard depth-limited solving is that we fix the opponent's strategy in the resolved part of the game, and we do not use a gadget. Hence, there is less computation required compared to the standard depth-limited solving.

\subsubsection{Convergence in current iterations}
CFR is an algorithm that needs to track average strategies since the current strategy does not converge to an equilibrium. CFR against best response or a fixed strategy is known to converge in the current strategy \cite{davis2014using,lockhart2019computing}. The next lemma says that CDBR also converges in the current strategy even when a value function is used after the depth-limit.

\begin{lemma}
Let $G$ be a zero-sum imperfect-information extensive-form game. Let $\sigma_\ps^\text{F}$ be the fixed opponent's strategy, and let $T$ be some trunk of the game. If we perform CFR with $t$ iterations in the trunk for player $\pr$, then for the strategy $\hat{\sigma}_{\pr}$ from the iteration with highest expected utility $\max_{\sigma_{\pr}^* \in \Sigma_\pr} u_\pr(\sigma_{\pr}^*, \sigma_\ps^F)_V^T - u_\pr(\hat{\sigma}_{\pr}, \sigma_\ps^F)_V^T \leq \Delta\sqrt{\frac{A}{t}}|\mathcal{I}_{TR}| + tN_S\epsilon_S$ where $\Delta$ is a span of leaf utilities, $\Delta = \max_{z \in Z} u_i(z) - \min_{z \in Z} u_i(z)$, $A$ is an upper bound on the number of actions, $|\mathcal{I}_{TR}|$ is a number of information sets in the trunk, $N_S$ is the number of information sets at the root of any subgame, and value function error is at most $\epsilon_S$.
\label{lem:convergence}
\end{lemma}
    \section{Safe Model Exploitation}
While CDBR maximizes the exploitation of the fixed opponent model, it allows a player to be exploited. When we face an opponent unsure if our model is perfect we must limit our exploitability. For example, when we gradually build a model during play, we must limit our exploitability in the initial game rounds when the model is still very inaccurate.

% \subsection{Linear Combination of CDBR and NE}
\subsection{Combination of CDBR and Nash Equilibrium}
The combination of CDBR and Nash equilibrium (CDBR-NE) is the first approach to limit exploitability. We can simultaneously compute both strategies using depth-limited solving and do a linear combination in every decision node. Let $p$ be the linear combination parameter and $\sigma_\ps^F$ be the opponent model. The gain and exploitability are limited accordingly.

$$
    \sigma^{LC}_\pr =  p\sigma^{NE}_\pr  +  (1 - p) \sbr(\sigma_\ps^F)_V^\mathcal{P}
$$
$$
    \mathcal{E}(\sigma^{LC}_\pr) = p \mathcal{E}(\sigma^{NE}_\pr) + (1 - p)\mathcal{E}(\sbr(\sigma_\ps^F)_V^\mathcal{P})
$$
$$
    \mathcal{G}(\sigma^{LC}_\pr, \sigma_{\ps}^F) =  p\mathcal{G}(\sigma^{LC}_\pr, \sigma_{\ps}^F)  + (1 - p) \mathcal{G}(\sbr(\sigma_\ps^F)_V^\mathcal{P}, \sigma_{\ps}^F)
$$
Desired exploitability or gain may be achieved by tuning the parameter $p$ while being only two times slower than the CDBR since we need to find the Nash equilibrium separately and perform CDBR. The required value function is the same for both parts and is still the same as in standard depth-limited solving.

Required computation is exactly running standard depth-limited solving and CDBR in parallel. Since CDBR computation has standard depth-limited solving as an upper bound, the required computation is at most twice as much as standard depth-limited solving.

\subsection{Continual Depth-limited RNR}
CDBR-NE is safe, but \cite{johanson2008computing} shows we can get a much better trade-off between gain and exploitability using RNR as it recovers the optimal Pareto set of $\epsilon$-safe best responses \cite{mccracken2004safe}. It also gives us better control of safety as it links the allowed exploitability to the achieved gain. We combine depth-limited solving with RNR to create CDRNR.

\subsubsection{Description of Restricted Nash Response}
For CDRNR, we first need to explain the RNR method briefly \cite{johanson2008computing}. RNR is solved by computing a modified game, adding an initial chance node with two outcomes that player $\pr$ does not observe. We copy the whole game tree under both chance node outcomes, and in one tree, the opponent plays the fixed strategy, and we denote it $G^F$. In the other tree, the opponent can play as he wants, resulting in a best response to the strategy of player $\pr$. We denote the other tree $G'$. Since player $\pr$ does not observe the initial chance node, his information sets span over $G'$ and $G^F$, and we denote the full modified game with both trees $G^M$. Parameter $p$ is the method to control the safety and is the initial probability of picking $G^F$.

\subsubsection{Definition}
Given the opponent's fixed strategy $\sigma_\ps^F$ and some subgame partitioning $\mathcal{P}$ of $G^M$, we define continual depth-limited restricted Nash response (CDRNR) recursively from the top. First, we have trunk $T_1^M$ using $\mathcal{P}$ and value function $V$. CDRNR for player $\pr$ in the trunk $T_1^M$ using value function $V$ is $\srnr(\sigma_\ps^F, p)_V^{T_1^M} = \argmax_{\sigma_\pr}u_\pr(\sigma_\pr, BR(\sigma_\pr))_V^{T_1^M}$. And then, in every following step, we create the new subgame $S_i^M$ and enlarge the trunk to incorporate this subgame, creating trunk $T_i^M$ = $T_{i-1}^M \cup S_i^M$. Next, we fix strategy $\sigma_\pr^{T_{i-1}^M}$ of player $\pr$ in the previous trunk $T_{i-1}^M$ and the CDRNR is $\srnr(\sigma_\ps^F, p)_V^{T_i} = \argmax_{\sigma_\pr^{S_i^M}} u_\pr(\sigma_\pr', BR(\sigma_\pr'))_V^{T_i^M}$ where $\sigma_\pr'$ is a combination of the strategy we optimize over and the fixed strategy from the previous step, formally $\sigma_\pr' = \sigma_\pr^{S_i^M} \cup \sigma_\pr^{T_{i-1}^M}$.

To summarize, we optimize only over the strategy in the subgame used in the current step while the strategy in the previous parts of the game is fixed for player $\pr$. The strategy of the opponent is fixed in $G^F$ and free in $G'$. We denote the full CDRNR strategy $\srnr(\sigma_\ps^F, p)_V^\partition$.

\subsubsection{Computing CDRNR}
In practice, we want to avoid duplicating the tree, and we also want to use the exact same value function as in the standard depth-limited solving. We explain why the RNR does not need the duplicated trees in practice. It only needs the reaches of the fixed strategy injected to the terminal nodes in the ratio defined by the parameter $p$. This allows us to precompute the reaches, run CFR as in standard depth-limited solving, and then modify the computed reaches from the iteration using the precomputed fixed reaches. However, we also need to query the value function, which differs from the previous one in the theoretical definition as it spans over the modified public state. However, since the reaches of $p_\pr$ are the same for $G'$ and $G_F$ we can compute it only once by joining the reaches together as in the previous example and querying the standard value function.

So far, we described exactly the standard depth-limited solving with only one modification: modifying the reaches using the fixed strategy. We also use the gadget since now the opponent can deviate in the $G'$. However, standard gadgets will fail due to the addition of imperfect parts of the opponent, and we discuss details along with a solution in the next section.
    \section{Gadgets and Model Exploitation}
\label{sec:gadgets}
When we exploit an opponent model, we need to worsen the strategy in terms of exploitability. We must limit how much the strategy worsens if we want a safe response. Gadgets are used to ensure exploitability does not increase \cite{moravcik2016refining,burch2014solving,brown2017safe}, and all the common gadgets work in scenarios where we do not expect our strategy to worsen. However, we need to worsen our strategy to exploit the opponent. We try to gain as much as possible in RNR in $G^F$. As soon as the strategy gets worse and the exploitability increases, the common gadgets fail to quantify this increase, which is crucial in applications doing a delicate trade-off. The requirement for the gadget which would work in CDRNR is in Definition \ref{def:value}

\begin{definition}
    For each information set $I \in \Iset_{\pr}$ we need the value of its part in $G'$, formally $\sum_{h \in I, h \in G', z \in Z, h \sqsubset z}\pi^\sigma(z)u_{\pr}(z)$, to be the same as the value we would get if we let player $\ps$ play BR in full $G'$.
    \label{def:value}
\end{definition}

The following examples show that the requirement is not satisfied for common resolving gadgets. We tried to construct a gadget that would satisfy the condition, but in the end, we kept the previously resolved parts of the game $G'$ followed by the value function which the game does not follow. We call the construction the full gadget, it satisfies the condition, and still only increases the size of the solved part linearly. Constant-size gadget fulfilling the Definition \ref{def:value} is an open problem.

\subsection{Restricted Nash Response with Gadget}
We show that commonly used resolving gadgets are either overestimating or underestimating the values from Definition~\ref{def:value} on an example game in Figure~\ref{fig:gadget_problem_resolving_gadget}. In the game, we first randomly pick a red or green coin. Player $\ps$ observes this and decides to place the coin heads up (RH, GH) or tails up (RT, GT). Player $\pr$ cannot observe anything and ultimately chooses whether he wants to play the game (P) or quit (Q).

In equilibrium, player $\pr$ plays action $Q$, and player $\ps$ can mix actions up to the point where the utility for $P$ is at most 0. This gives the value of the game 0, and counterfactual values in all inner nodes are also 0. Assuming the modified RNR game $G^M$ with an opponent model playing $GT$, that makes it worth for player $\pr$ to play (P) in the game, $\ps$ will play $(RH,GH)$ in $G'$ with utility -3 for player $\pr$ in $G'$. We will use gadgets to resolve the game from the player $\pr$ information set.

\def\istgamescale{0.8}
\begin{figure*}
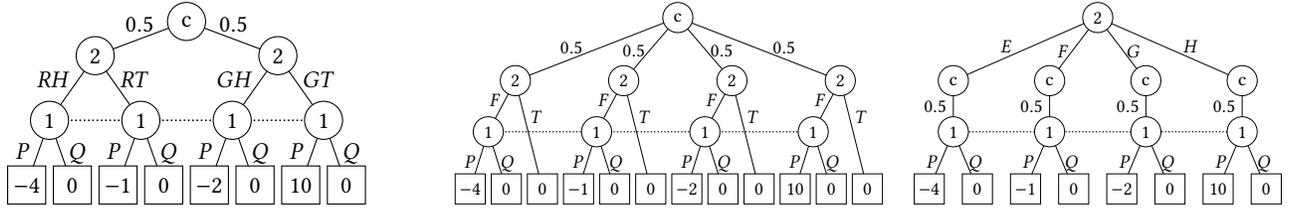

    \begin{minipage}{0.32\linewidth}
        \centering
        \scalebox{1.012}{
            \begin{istgame}
            \setistOvalNodeStyle{.5cm}
            \setistRectangleNodeStyle{.5cm}
            \xtShowTerminalNodes[rectangle node]
            \istrooto(0){c}+{4.5mm}..{2.4cm}+
                \istb{0.5}[left, xshift=4mm, yshift=2.2mm]  \istb{0.5}[right, xshift=-4mm, yshift=2.2mm]  \endist
            \istrooto(1)(0-1){$\ps$}+{8.5mm}..{1.2cm}+
                \istb{RH}[left, xshift=1.5mm, yshift=2mm]  \istb{RT}[right, xshift=-1.5mm, yshift=2mm]  \endist
            \istrooto(2)(0-2){$\ps$}+{8.5mm}..{1.2cm}+
                \istb{GH}[left, xshift=1.5mm, yshift=2mm]  \istb{GT}[right, xshift=-1.5mm, yshift=2mm]  \endist
            \istrooto(3)(1-1){$\pr$}+{8.5mm}..{.6cm}+
                \istbt{P}[l]{-4}[center]  \istbt{Q}[r]{0}[center]  \endist
            \istrooto(4)(1-2){$\pr$}+{8.5mm}..{.6cm}+
                \istbt{P}[l]{-1}[center]  \istbt{Q}[r]{0}[center]  \endist
            \istrooto(5)(2-1){$\pr$}+{8.5mm}..{.6cm}+
                \istbt{P}[l]{-2}[center]  \istbt{Q}[r]{0}[center]  \endist
            \istrooto(6)(2-2){$\pr$}+{8.5mm}..{.6cm}+
                \istbt{P}[l]{10}[center]  \istbt{Q}[r]{0}[center]  \endist
            \xtInfoset(1-1)(1-2)
            \xtInfoset(1-2)(2-1)
            \xtInfoset(2-1)(2-2)
            \end{istgame}
        }
    \end{minipage}
    \hfill
    \begin{minipage}{0.32\linewidth}
        \centering
        \scalebox{\istgamescale}{
            \begin{istgame}
            \setistOvalNodeStyle{.5cm}
            \setistRectangleNodeStyle{.5cm}
            \xtShowTerminalNodes[rectangle node]
            \istrooto(0){c}+{10.5mm}..{1.8cm}+
                \istb{0.5}[left, xshift=0mm, yshift=0.5mm]  \istb{0.5}[left, xshift=1mm, yshift=0.5mm] \istb{0.5}[right, xshift=-1mm, yshift=0.5mm] \istb{0.5}[right, xshift=0mm, yshift=0.5mm]  \endist
            \istrooto(1)(0-1){$\ps$}+{8.5mm}..{0.9cm}+
                \istb{F}[left, xshift=1.5mm, yshift=2mm]  \istbAt<level distance=1.2*\xtlevdist>{T}[right, xshift=-0.8mm, yshift=2.8mm]{0}[center]  \endist
            \istrooto(2)(0-2){$\ps$}+{8.5mm}..{0.9cm}+
                \istb{F}[left, xshift=1.5mm, yshift=2mm]  \istbAt<level distance=1.2*\xtlevdist>{T}[right, xshift=-0.8mm, yshift=2.8mm]{0}[center]  \endist
            \istrooto(3)(0-3){$\ps$}+{8.5mm}..{0.9cm}+
                \istb{F}[left, xshift=1.5mm, yshift=2mm]  \istbAt<level distance=1.2*\xtlevdist>{T}[right, xshift=-0.8mm, yshift=2.8mm]{0}[center]  \endist
            \istrooto(4)(0-4){$\ps$}+{8.5mm}..{0.9cm}+
                \istb{F}[left, xshift=1.5mm, yshift=2mm]  \istbAt<level distance=1.2*\xtlevdist>{T}[right, xshift=-0.8mm, yshift=2.8mm]{0}[center]  \endist
            \istrooto(5)(1-1){$\pr$}+{9.5mm}..{.6cm}+
                \istbt{P}[left, yshift=-0.5mm, xshift=0.5mm]{-4}[center]  \istbt{Q}[right, yshift=-0.5mm, xshift=-0.5mm]{0}[center]  \endist
            \istrooto(6)(2-1){$\pr$}+{9.5mm}..{.6cm}+
                \istbt{P}[left, yshift=-0.5mm, xshift=0.5mm]{-1}[center]  \istbt{Q}[right, yshift=-0.5mm, xshift=-0.5mm]{0}[center]  \endist
            \istrooto(7)(3-1){$\pr$}+{9.5mm}..{.6cm}+
                \istbt{P}[left, yshift=-0.5mm, xshift=0.5mm]{-2}[center]  \istbt{Q}[right, yshift=-0.5mm, xshift=-0.5mm]{0}[center]  \endist
            \istrooto(8)(4-1){$\pr$}+{9.5mm}..{.6cm}+
                \istbt{P}[left, yshift=-0.5mm, xshift=0.5mm]{10}[center]  \istbt{Q}[right, yshift=-0.5mm, xshift=-0.5mm]{0}[center]  \endist
            \xtInfoset(1-1)(2-1)
            \xtInfoset(2-1)(3-1)
            \xtInfoset(3-1)(4-1)
            \end{istgame}
        }
    \end{minipage}
    \begin{minipage}{0.32\linewidth}
        \centering
        \scalebox{\istgamescale}{
            \begin{istgame}
            \setistOvalNodeStyle{.5cm}
            \setistRectangleNodeStyle{.5cm}
            \xtShowTerminalNodes[rectangle node]
            \istrooto(0){$\ps$}+{10.5mm}..{1.6cm}+
                \istb{E}[left, xshift=0mm, yshift=0.8mm]  \istb{F}[left, xshift=1mm, yshift=0.5mm] \istb{G}[right, xshift=-1mm, yshift=0.5mm] \istb{H}[right, xshift=0mm, yshift=0.8mm]  \endist
            \istrooto(1)(0-1){c}+{8.5mm}..{0.8cm}+
                \istb{0.5}[left, xshift=0mm, yshift=1mm] \endist
            \istrooto(2)(0-2){c}+{8.5mm}..{0.8cm}+
                \istb{0.5}[left, xshift=0mm, yshift=1mm] \endist
            \istrooto(3)(0-3){c}+{8.5mm}..{0.8cm}+
                \istb{0.5}[left, xshift=0mm, yshift=1mm] \endist
            \istrooto(4)(0-4){c}+{8.5mm}..{0.8cm}+
                \istb{0.5}[left, xshift=0mm, yshift=1mm] \endist
            \istrooto(5)(1-1){$\pr$}+{9.5mm}..{.8cm}+
                \istbt{P}[left, yshift=-0.5mm, xshift=0.5mm]{-4}[center]  \istbt{Q}[right, yshift=-0.5mm, xshift=-0.5mm]{0}[center]  \endist
            \istrooto(6)(2-1){$\pr$}+{9.5mm}..{.8cm}+
                \istbt{P}[left, yshift=-0.5mm, xshift=0.5mm]{-1}[center]  \istbt{Q}[right, yshift=-0.5mm, xshift=-0.5mm]{0}[center]  \endist
            \istrooto(7)(3-1){$\pr$}+{9.5mm}..{.8cm}+
                \istbt{P}[left, yshift=-0.5mm, xshift=0.5mm]{-2}[center]  \istbt{Q}[right, yshift=-0.5mm, xshift=-0.5mm]{0}[center]  \endist
            \istrooto(8)(4-1){$\pr$}+{9.5mm}..{.8cm}+
                \istbt{P}[left, yshift=-0.5mm, xshift=0.5mm]{10}[center]  \istbt{Q}[right, yshift=-0.5mm, xshift=-0.5mm]{0}[center]  \endist
            \xtInfoset(8)(7)
            \xtInfoset(7)(6)
            \xtInfoset(6)(5)
            \end{istgame}
        }
    \end{minipage}
    \caption{(left) A game to show problems with gadgets. (middle) Resolving gadget for the left game. (right) Max-margin and Reach max-margin gadget.}
    \label{fig:gadget_problem_resolving_gadget}
\end{figure*}
\paragraph{Resolving Gadget} \cite{burch2014solving}
Resolving gadget constructs a game that allows the opponent to choose whether he wants to play in the subgame we created or terminate. It is done by inserting nodes above the roots of the subgame, and the opponent has two actions before each root, either to follow and play the game or to terminate and receive a reward they would get by playing the previously resolved equilibrium. Those nodes are grouped into information sets based on the opponent's augmented information sets at the subgame's roots.

Resolving gadget on the game in Figure~\ref{fig:gadget_problem_resolving_gadget} has all utilities after \textit{terminate} actions 0. When we resolve the gadget, the utility is 0. However, when player $\pr$ deviates to action $P$, player $\ps$ plays \textit{follow} action in all but the rightmost node, and the utility of player $\pr$ will be -3.5. Therefore, the common resolving gadget may overestimate the real exploitability of the strategy in the subgame. Overestimating may lead to not exploiting as much as we can and makes it impossible to prove Theorem~\ref{thm:SRNR nash} about the minimal gain of our algorithm. Normalization of the chance node might seem to solve the problem, but it would only halve the value to -1.75, which is still incorrect.

\paragraph{(Reach) Max-margin Gadget} \cite{moravcik2016refining,brown2017safe}
Both reach max-margin and max-margin gadgets allow the opponent to choose any information set at the start of the subgame. This is done by inserting a single node to the top, where the opponent has an action for each of his augmented information sets in the root of the subgame. After the action is a chance node to split the information set to the histories, with correct reaches by the resolving player and chance. Furthermore, all the terminal values are adjusted by the same value, which is in the terminate action in the resolving gadget. In the reach max-margin gadget, this value is further modified by an approximation of opponent mistakes.

All the counterfactual best response values are 0, and we assume both players played perfectly before the depth limit. Hence, we do not need to offset any node in the (reach) max-margin gadget. Then, both gadget constructions are identical. We add the initial decision node and the chance nodes (since there is only one state in each information set, the nodes have only one action). When we solve the gadget, player $\pr$ will pick action $Q$, and the gadget value will be 0. However, when player $\pr$ deviates to action $P$, player $\ps$ now has a choice between terminal utilities and picks action $E$ to receive the highest one. This will result in utility -2, and we see that (reach) max-margin gadgets can underestimate the real exploitability. It can lead to our algorithm being more exploitable than we want using some $p$, and it makes Theorem~\ref{thm:DBR nash} impossible to prove. Similar to the previous gadget, normalizing the chance nodes would lead to double the utility, which is still incorrect.

\paragraph{Full Gadget}
The only construction fulfilling the requirements we found is to keep all the previously explored parts of the game in a path to the root and use a value function when we leave. Using the optimal value function, the construction simulates the best response, which measures exploitability.

Formally, when we reach subgame $S_i$ we construct a composite game by joining $S_i$, the trunk $T$, and all the previous subgames $S_j, j \in {1,...,i - 1}$. It corresponds to the illustration in Figure~\ref{fig:subgames}, and the value function will evaluate every public state $PS$ from which the actions lead outside of the game.

\begin{figure}
    \centering
    \includegraphics[width=1\linewidth]{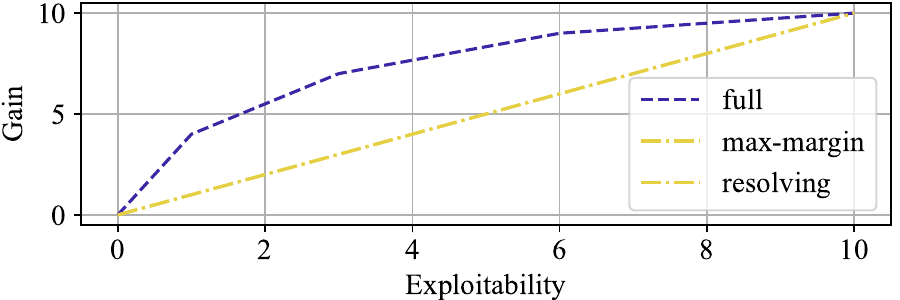}    
    \caption{Comparison of Gain and Exploitability of the solutions using Full gadget compared to other gadgets. Max-margin and resolving gadget lines are overlapping.}
    \label{fig:g_break}
\end{figure}

We show that other gadgets can overestimate or underestimate exploitability, which could shift the distribution of the parameter $p$, and we could still compute the same solutions. However, in Figure~\ref{fig:g_break}, we show the results of games created to break the other gadgets. We have two games in which the full gadget behaves correctly, and each breaks the other gadget. Figure~\ref{fig:g_break} shows the results joined together. Both games have five actions for the exploiter, and each one is crucial in reconstructing the full RNR set. The full gadget can recover all five actions using different $p$, but other gadgets can only compute two actions regardless of the choice of $p$.

\paragraph{Complexity of CDRNR}
We can use other gadgets in CDRNR to obtain fast algorithms without any theoretical guarantees and with the same bound on computation as we have for the combination of CDBR and Nash equilibrium. The soundness of the algorithm relies on using the full gadget, which requires solving increasingly larger parts of the game as the depth increases. This increase in size is linear with the resolving steps, so the full algorithm complexity is quadratic in the depth of the game compared with vanilla continual resolving or CDBR. It makes the algorithm applicable to shorter games like Poker or Goofspiel but infeasible for long games like Stratego. 

\subsubsection{Soundness of CDRNR}
For the following theorems, we denote $\mathcal{S}$ as the set consisting of the trunk and all the subgames explored when computing the response, and $\mathcal{S'}$ is the same but without the last subgame. $S^B$ denotes a border of the subgame. We also denote $S^O$, the set of all the states where we leave the trunk not going in the currently resolving subgame.

\begin{theorem}[Gain of CDRNR]
Let $G$ be any zero-sum extensive-form game and let $\sigma_\ps^\text{F}$ be any fixed opponent's strategy in $G$. Then we set $G^M$ as restricted Nash response modification of $G$ using $\sigma_\ps^\text{F}$. Let $\mathcal{P}$ be any subgame partitioning of the game $G^M$ and using some $p \in \langle0,1\rangle$, let $\sigma_\pr^{\srnr}$ be a CDRNR given approximation $\bar{V}$ of value function $V$ with error at most $\epsilon_V$ and opponent strategy $\sigma_\ps^\text{F}$ approximated in each step with regret at most $\epsilon_R$, formally $\sigma_\pr^{\srnr} = \srnr(\sigma_\ps^F, p)_V^\mathcal{P}$. Let $\sigma^{NE}$ be any Nash equilibrium in $G$. Then $u_\pr(\sigma_\pr^{\srnr}, \sigma_\ps^F) + \sum_{S \in \mathcal{S'}}|I_{S^O}|(1-p)\epsilon_V + |\mathcal{S}|\epsilon_R + \sum_{S \in \mathcal{S'}}|I_{S^B}|\epsilon_V \geq u_\pr(\sigma^{NE})$.
\label{thm:SRNR nash}
\end{theorem}

The previous theorem states that our approaches will receive at least the value of the game when responding to the model. All the proofs are in the appendix.

\begin{theorem}[Safety of CDRNR]
Let $G$ be any zero-sum extensive-form game and let $\sigma_\ps^\text{F}$ be any fixed opponent's strategy in $G$. Then we set $G^M$ as restricted Nash response modification of $G$ using $\sigma_\ps^\text{F}$. Let $\mathcal{P}$ be any subgame partitioning of the game $G^M$ and using some $p \in \langle0,1\rangle$, let $\sigma_\pr^{\srnr}$ be a CDRNR given approximation $\bar{V}$ of the optimal value function $V$ with error at most $\epsilon_V$, partitioning $\mathcal{P}$ and opponent strategy $\sigma_\ps^\text{F}$, which is approximated in each step with regret at most $\epsilon_R$, formally $\sigma_\pr^{\srnr} = \sigma_\pr^\srnr(\sigma_\ps^F, p)_V^\mathcal{P}$. Then exploitability has a bound $\mathcal{E}(\sigma_\pr^\srnr) \leq \mathcal{G}(\sigma_\pr^\srnr, \sigma_\ps^F)\frac{p}{1-p} + \sum_{S \in \mathcal{S'}}|I_{S^O}|(1-p)\epsilon_V + |\mathcal{S}|\epsilon_R + \sum_{S \in \mathcal{S'}}|I_{S^B}|\epsilon_V$, $\mathcal{E}$ and $\mathcal{G}$ are defined in Section~\ref{sec:background}.
\label{thm:DBR nash}
\end{theorem}

The last theorem is more complex, and it bounds the exploitability by the gain of the strategy against the model. With $p = 0$, it is reduced to the continual resolving, and with $p = 1$ to CDBR with unbounded exploitability. The theorem shows the parameter $p$ directly links allowed exploitability to the gain we receive. The same works in RNR without the resolving and value errors; as far as we know, the authors do not explicitly mention it.

SES has bound relies on opponent estimation being close to an equilibrium strategy. When the estimation is more different, the bound is infinity for a large portion of the parameter alpha. We give a detailed explanation in the appendix.

More intuitively, Theorem~\ref{thm:SRNR nash} says that by playing the proposed algorithm, we have at least the same safety guarantees we would get by playing a NE against the opponent we modeled correctly. Theorem~\ref{thm:DBR nash} allows us to choose a trade-off between the exploitation of the opponent behaving according to the model and safety against an opponent who would deviate arbitrarily from the model.
    \section{Experiments}
We compared CDBR and local best response (LBR) \cite{lisy2017eqilibrium}. We empirically show the performance of CDRNR and explore the trade-off between exploitability and gain in CDRNR. The appendix contains hardware setup, domain description, algorithm details, and experiments on more domains. We use two types of opponent strategies: strategies generated by few CFR iterations and random strategies with different seeds.

\subsection{SES explanation}
Safe exploitation search (SES) \cite{liu2022safe} is a similar method to the one we propose. However, there are two significant differences. First, the method uses a max-margin gadget without the analysis we did. Hence, the bound of exploitability is very loose, and for a wide range of inputs, the exploitability can be unbounded. Second, the method does not fix the opponent's strategy at all and only uses opponent reaches when resolving the subgame. As a result, SES exploitation is very limited, and as we show in experiments, it is often worse than using the best Nash equilibrium. On top of that, in some games, it fundamentally cannot exploit the opponent, notably in any perfect information game, even with simultaneous moves.

\begin{figure}[ht]
    \centering
    \includegraphics[width=1\linewidth]{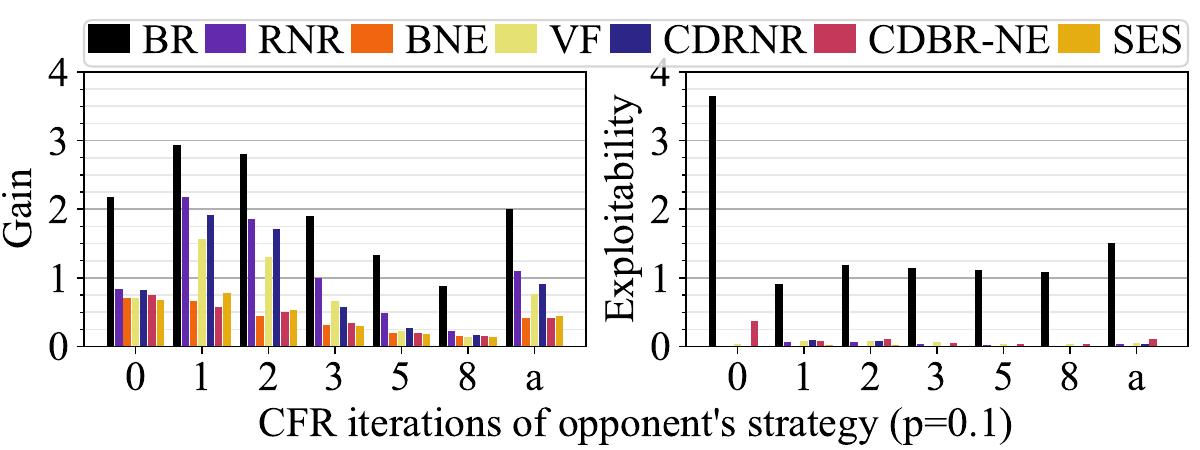}
    \includegraphics[width=1\linewidth]{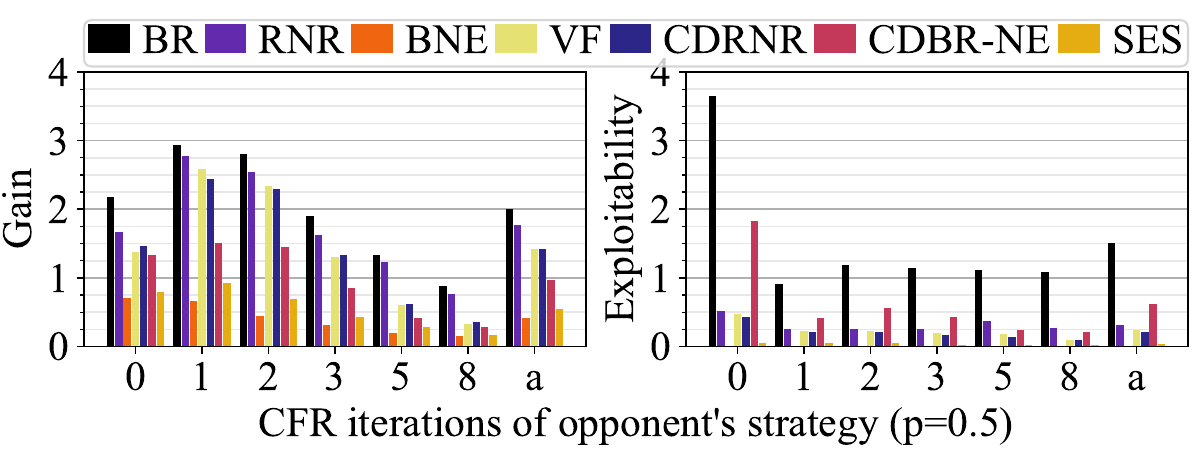}
    \includegraphics[width=1\linewidth]{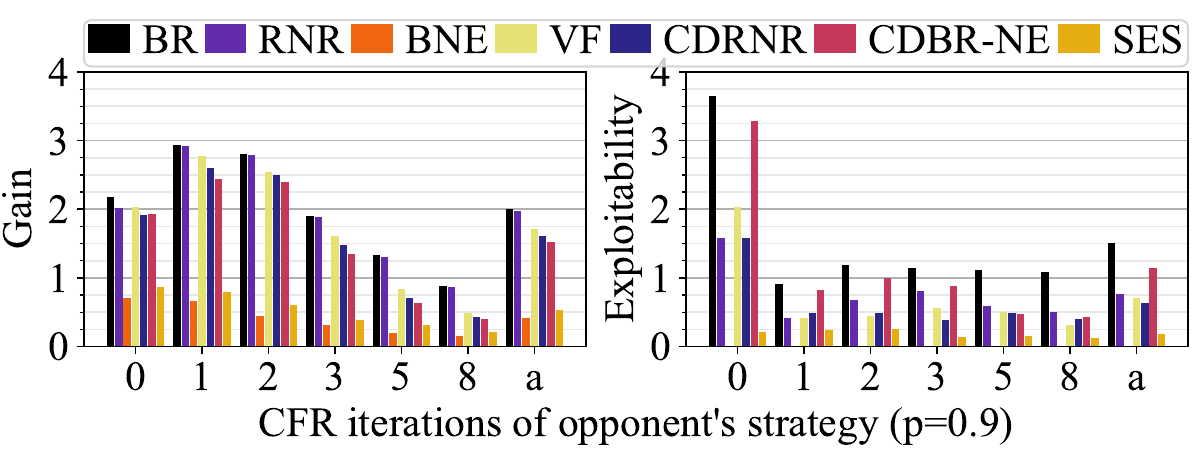}
    \caption{Gain and exploitability comparison of BR, RNR, best Nash equilibrium (BNE), CDBR-NE, SES, and CDRNR in Leduc Hold'em against strategies from CFR using a small number of iterations with different $p$ values. The \textbf{a} stands for the average of the other values. VF is CDRNR using an imperfect value function.}
    \label{fig:expl}
\end{figure}

\subsection{Exploitability of Robust Responses}
We report both gain and exploitability for CDRNR on Leduc Hold'em. Results in Figure~\ref{fig:expl} show that the proven bound on exploitability works in practice, and we see that the bound is very loose in practice. For example, with $p=0.5$, the bound on the exploitability is the gain itself, but the algorithm rarely reaches even a tenth of the gain in exploitability. This shows that the CDRNR is similar to the restricted Nash response because, with a well-set $p$, it can significantly exploit the opponent without significantly raising its exploitability. The only exception is the CDRNR with a value function, which shows the added constants from the value function's imprecision. When the opponent is close to optimal, we see that the exploitability can rise above the gain.

In most cases, the gain and exploitability of CDRNR are lower than that of RNR. Gain must be lower because of the different value function, but the exploitability can be higher, as seen with $p=0.1$ against low iteration strategies, due to the depth-limited nature. We provide results on other domains and for more parameter values $p$ in the appendix.

We also compare the algorithm against the best possible Nash equilibrium. We compute the best NE by a  linear program, and it serves as the theoretical limit of maximal gain, which does not allow exploitability. It would be impossible to compute for larger games. We can see we can gain more than twice as much, with exploitability still being almost zero. 

In our results on smaller games, we use the optimal value function computed by a linear program. To show the performance of imperfect value function we included results where the value function is approximated by CFR with 500 iterations. As expected, the imperfect value function slightly decreases the performance but is comparable to the algorithms using the optimal value function.

The last comparison is with SES, which performs poorly, and its gain is only slightly above the best Nash equilibrium. Conversely, it is almost not exploitable. Our results are consistent with results in the paper \cite{liu2022safe} and are a direct consequence of using the information from the opponent model only to set the reaches to the subgame. Only reaches are not enough to do meaningful exploitation, and SES produces strategies that are very close to Nash equilibrium.

\subsection{ABR vs CDBR}
We compared CDBR with ABR \cite{timbers2020approximate} on Leduc and different imperfect information Goofspiel. The results are in Table~\ref{tbl:abr_vs_cdbr}, showing that our method is slightly behind in Leduc, even with the highest search depth. In Goofspiel, CDBR1, which looks only one action into the future, is already pretty good, and as soon as we allow CDBR to look three turns in the future, it fully exploits the opponents. In IIGS4, that is half of the game, but in IIGS6, it is less than $\frac{1}{3}$ of the game, and it is still enough.

\begin{table}
    \centering
        \setlength\tabcolsep{6pt}
        \renewcommand{\arraystretch}{1.2}
        \begin{tabular}{|c|c|c|c|c|c|}
            \hline
            & \textbf{ABR} & \textbf{CDBR1} & \textbf{CDBR3} & \textbf{CDBR5}  & \textbf{BNE}\\ \hline
            Leduc & 98\% & 74.4\% & 96\% & 97.1\% & 32.6\% \\  \hline
            IIGS4 & 97\% & 98.5\% & 100\% & 100\% & 77.1\% \\  \hline
            IIGS5 & 95\% & 93.5\% & 100\% & 100\% & 50.4\% \\  \hline
            IIGS6 & 97\% & 90.7\% & 100\% & 100\% & 45.9\% \\  \hline
        \end{tabular}    
    \caption{Comparison of ABR and CDBR with BNE baseline on different games against uniform random. The values are the percentage of gain achieved by the best response.}
    \label{tbl:abr_vs_cdbr}
\end{table}

\subsection{Local Best Response vs. CDBR}
We compare LBR and CDBR in Leduc Hold'em. We also compare CDBR with the BR in imperfect information Goofspiel 5, but without LBR, which is poker-specific. We show that CDBR and LBR are very similar with smaller steps, and as we increase the depth limit of CDBR, it starts outperforming LBR. The behavior differs slightly based on the specific strategy because LBR assumes the player continues the game by only calling till the end, while CDBR uses the perfectly rational extension.

The results in Figure~\ref{fig:lbrsbr} show that both concepts are good at approximating the best response, with CDBR being better against both strategies. LBR looks at one following action, so it is best compared to the CDBR1 in terms of comparability. Next, we observe a lack of monotonicity in step increase, which is explained with an example in the appendix. When we increase the depth limit, the algorithm can exploit some early mistake that causes it to miss a sub-tree where the opponent makes a much bigger mistake in the future. We can clearly see the difference between the algorithm with guarantees and LBR without them. Against strategy from 34 CFR iterations, LBR can no longer achieve positive gain and only worsens with more iterations. In contrast, CDBR can always achieve at least zero gain, assuming we have an optimal value function.

\begin{figure}
    \centering
    \includegraphics[width=1\linewidth]{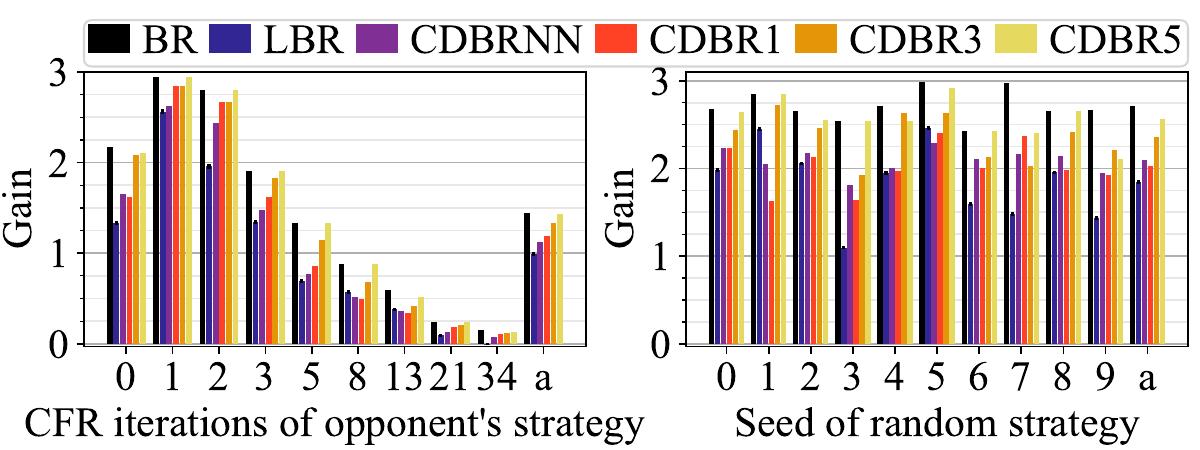}
    \includegraphics[width=1\linewidth]{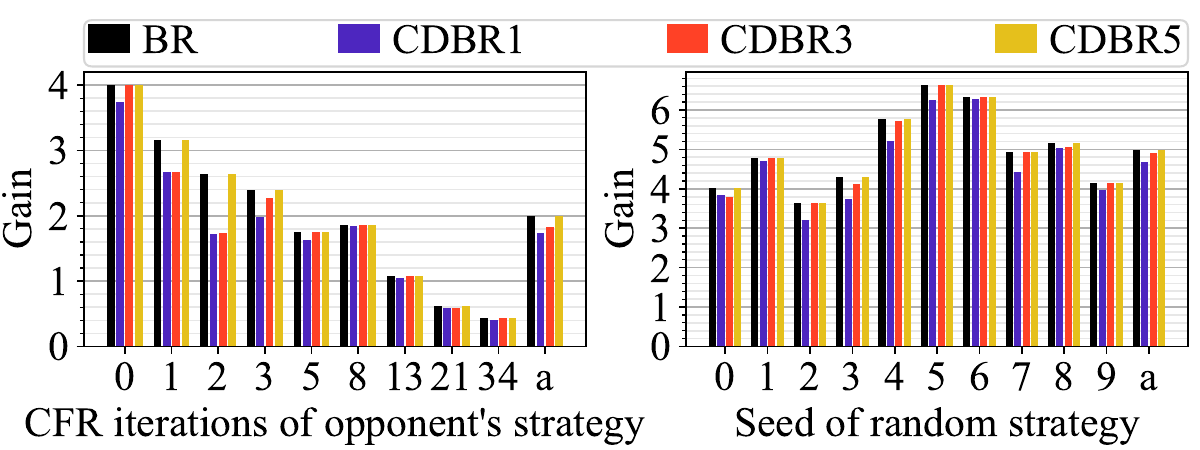}
    \caption{Gain comparison of best response (BR), local best response (LBR - only poker), and continual depth-limited best response (CDBR) in Leduc Hold'em (top) and IIGoofspiel 5 (bottom) against strategies from CFR using a small number of iterations (left) and random strategies (right). The \textbf{a} stands for the average of the other values in the plot. The number after CDBR stands for the number of actions CDBR was allowed to look at in the future, and CDBRNN is a one-step CDBR with a neural network as a value function.}
    \label{fig:lbrsbr}
\end{figure}

\subsection{Playing SlumBot}
We tested our method in HUNL against SlumBot \cite{jackson2017targeted}, which is a publicly available abstraction-based bot commonly used for benchmarking. We used a fold, call, pot, and all-in (FCPA) abstraction for CDBR, and we also rerun the results of LBR against the new SlumBot (since the author of SlumBot confirmed that the strategy we were provided is different from the one LBR used but the same as the one provided to authors of ABR). CDBR significantly outperforms both ABR and LBR and we report the results in Table~\ref{tbl:slumbot_results}. We tried to run the LBR restricted to only call in the first two rounds but found it no longer helps against the new SlumBot, and we reported results for LBR with FCPA. Authors in \cite{timbers2020approximate} also use FCPA for their method but did not report a confidence interval for the results. However, the difference is large enough to have statistical significance if we assume they played over 50 thousand hands.

\begin{table}
    \centering
        \setlength\tabcolsep{6pt}
        \renewcommand{\arraystretch}{1.2}
        \begin{tabular}{|c|c|c|c|c|c|c|c|c|}
            \hline
            & \textbf{ABR} & \textbf{LBR} & \textbf{CDBR} \\ \hline
            Win-rate [mbb/h] & 1259 ± ? & 1388 ± 150 & 1774 ± 137 \\  \hline
        \end{tabular}
    \caption{Comparison of CDBR with LBR and ABR against SlumBot. Results are reported in milibigblinds per hand (mbb/h) with 95\% confidence intervals. (Authors of ABR did not provide confidence intervals.)}
    \label{tbl:slumbot_results}
\end{table}
    \section{Related Work}
This section describes the related work focusing more on distinguishing our novel contributions.

Restricted Nash response (RNR) \cite{johanson2008computing} is an opponent-exploiting scheme. It solves the entire game and allows changing the trade-off between exploitability and gain. Essentially it always produces $\epsilon$ safe best response \cite{mccracken2004safe}. It accomplishes the goal by copying the whole game and then fixing the opponent in one part while having a chance node at the top decide which game we play.  

However, it is impossible to compute RNR in huge games, and we fused the RNR approach with depth-limited solving creating a novel algorithm we call CDRNR. CDRNR is the best performing theoretically sound robust response calculation that can be done in huge games, enabling new opponent exploiting approaches.

Local best response \cite{lisy2017eqilibrium} is an evaluation tool for poker. It uses a given abstraction in its action space. It picks the best action in each decision set, looking at the fold probability for the opponent in the next decision node and then assuming the game is called until the end. Our algorithm CDBR is a generalization of the LBR because we can use it on any game solvable by depth-limited solving. In the algorithm, we have explicitly defined value function, which we can exchange for different heuristics.

Approximate best response (ABR) \cite{timbers2020approximate} is also a generalization of the LBR and showed promising results in evaluating strategies. However, our approach focuses on model exploitation, which requires crucial differences, such as quick re-computation against unseen models. ABR needs to independently learn the response for every combination of opponent and game, making it unusable in the opponent modeling scenario. Our algorithms learn a single domain-specific value function and can subsequently compute strategies against any opponent in the run-time. Furthermore, ABR and even CDBR are extremely brittle, making it a bad choice if we are unsure about the opponent, which we often are in a game against an unknown opponent. On the other hand, CDRNR tackles exactly this issue and provides powerful exploitation with very limited exploitability.

Another reinforcement-learning (RL) method uses neuroevolution with RL, they show a significant increase in performance over DQN and evaluate their method on HUNL. However, they do not share the details of the baseline opponents they played against. We tried to contact the authors without any response and we could not compare the performance with our methods. \cite{xu2021efficient}
    \section{Conclusion}
Opponent modeling and exploitation is an essential topic in computational game theory, with many approaches attempting to model and exploit opponents in various games. However, exploiting opponents in very large games is not trivial, and only recently was an algorithm created to exploit models in depth-limited solving. We explain the problem arising from the inability of gadgets to measure exploitability and we propose a full gadget that solves the issue. We propose a new algorithm to quickly compute depth-limited best response and depth-limited restricted Nash response once we have a value function, creating the best performing theoretically sound robust response applicable to large games. Finally, we empirically evaluate the algorithms on multiple games. We show that CDBR outperforms LBR in both Leduc and HUNL and we show that CDBR performs significantly better against SlumBot than any other previous method. Finally, we show that CDRNR outperforms SES in any game and can achieve over half of the possible gain without almost any exploitability.
    
    \if\printacknowledgments1
    \section*{Acknowledgements}
        This research was supported by Czech Science Foundation (grant no. GA22-26655S) and and the Grant Agency of the Czech Technical University in Prague, grant No. SGS22/168/OHK3/3T/13. Computational resources were supplied by the project "e-Infrastruktura CZ" (e-INFRA CZ LM2018140) supported by the Ministry of Education, Youth and Sports of the Czech Republic and also the OP VVV funded project CZ.02.1.01/0.0/0.0/16\_019/0000765 ``Research Center for Informatics'' which are both gratefully acknowledged. We also greatly appreciate the help of Eric Jackson, who provided the SlumBot strategy and helped with the experiments. 
    \fi

    \bibliographystyle{ACM-Reference-Format} 
    \bibliography{bibliography}   
\fi

\if\printappendix1    
    \clearpage
    \appendix
\section{Pseudocode}
In this section, we present the pseudocode of both CDBR and CDRNR.
\begin{algorithm}
\caption{Computing CDRNR (CDBR)}\label{alg:cdbr}
\begin{algorithmic}
\Require game $G$, model strategy $\sigma_\ps^F$, value function $V$
\State create (virtually) modified game $G^M$ (only CDRNR)
\State create subgame partitioning $\partition$ from $G^M$ ($G$) 
\State $\sigma_\pr^\sbr$ = empty strategy ready to be filled
\State $I$ = initial information set in which we act
\State $S$ = $\empty$ current constructed subgame
\While{$I$ not null}
\If{$I$ not in $S$}
    \State S = construct new $S$ from $\partition$ using previous $S$ (CDRNR does not delete trunk)
    \State $\sigma_\pr^\sbr$ += solution of $S$ using CFR+ with $V$
\Else
    \State pick action $A$ according to $\sigma_\pr^\sbr$ in $I$
    \State get new $I$ using $A$ (or null if the game ends)
\EndIf
\EndWhile
\end{algorithmic}
\end{algorithm}

\section{Additional CDBR results}
We compare LBR and CDBR in Leduc Holde'm. We also compare CDBR with just the BR in imperfect information Goofspiel, but without LBR, which is poker specific. We show that CDBR and LBR are very similar with smaller steps, and as we increase the depth-limit of CDBR, it starts outperforming LBR. The behavior differs in every strategy because LBR assumes the player continues the game by only calling till the end, while CDBR uses the perfectly rational extension. Furthermore, it is possible to exchange the value function of CDBR, and both concepts would be very similar. However, we would lose the guarantee that CDBR will never perform worse than the value of the game.

\begin{figure}
    \centering
    \includegraphics[width=1\linewidth]{graphics/lbr_sbr.pdf}
    \includegraphics[width=1\linewidth]{graphics/gain_iigs5.pdf}
    \includegraphics[width=1\linewidth]{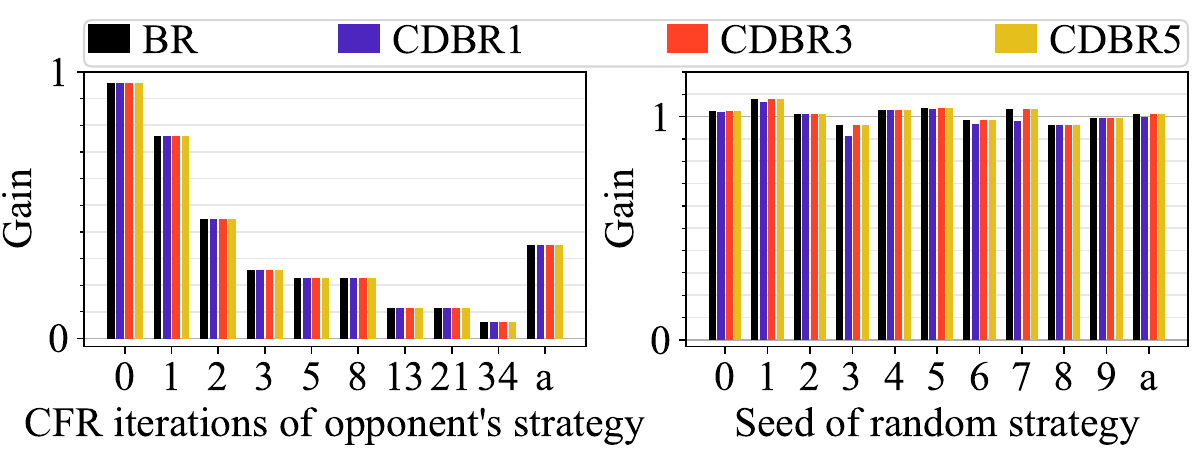}
    \caption{Gain comparison of best response (BR), local best response (LBR - only poker), and continual depth-limited best response (CDBR) in Leduc Hold'em (top), Imperfect information Goofspiel (middle) and Small Liar's dice (bottom) against strategies from CFR using a small number of iterations (left) and random strategies (right). The \textbf{a} stands for the average of the other values in the plot. The number after CDBR stands for the number of actions CDBR was allowed to look in the future, and CDBRNN is one step CDBR with a neural network as a value function.}
    \label{fig:lbrsbrext}
\end{figure}

Looking at the results in Figure~\ref{fig:lbrsbrext}, we can see that both concepts are good at approximating the best response, with CDBR being better against both strategies. LBR looks at one following action, so in terms of comparability, it is best compared to the CDBR1. Next, we observe a lack of monotonicity in step increase, which is linked to the counterexample in Figure \ref{fig:sbr_nash_counterexample}. When we increase the depth-limit, the algorithm can exploit some early mistake that causes it to miss a sub-tree where the opponent makes a much bigger mistake in the future. We can clearly see the difference between the algorithm with guarantees and LBR without them. Against strategy from 34 CFR iterations, LBR can no longer achieve positive gain and only worsens with more iterations. In contrast, CDBR can always achieve at least zero gain (assuming we have an optimal value function).

\setcounter{theorem}{0}
\section{Counterexample Gadget Game}
Examples in Figure~\ref{fig:example_game} are the games used to generate the Figure~\ref{fig:g_break}. The plot in the figure combines two games that have pure actions with the same gain and exploitability. The full gadget reconstructs the Pareto set using all the actions in both games. Other gadgets fail in one of the presented games in Figure~\ref{fig:example_game}. In Figure~\ref{fig:bestactions}, we show the expected utility of all the actions in the CDRNR version of the game, showing that for the resolving gadget and max-margin gadget, two actions dominate all the others, and we can not select any $p$ which would resolve any of the remaining actions.

\begin{figure*}
    \centering
    \includegraphics[width=0.45\textwidth]{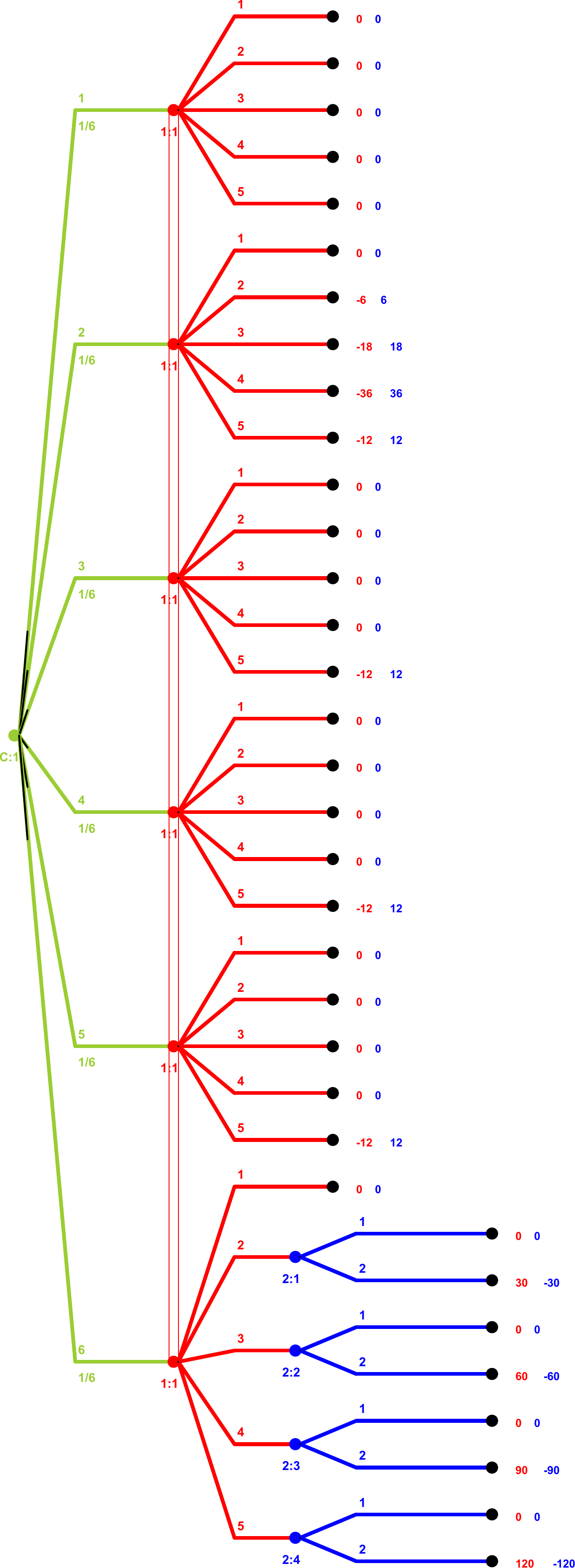}
    \hfill
    \includegraphics[width=0.43\textwidth]{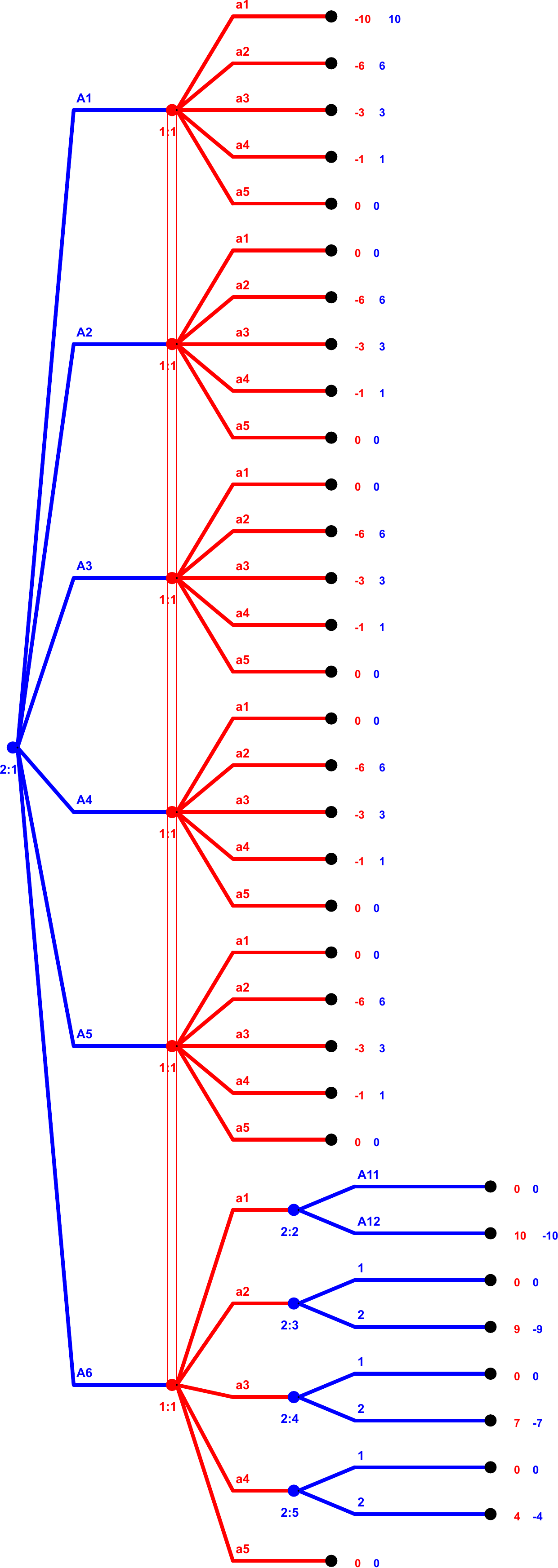}
    \caption{Example games to show the inability of common gadgets to reconstruct the whole Pareto set in the CDRNR setting. Left: Game 1 to break the max-margin gadget. Right: Game 2 to break the resolving gadget.}
    \label{fig:example_game}
\end{figure*}

\begin{figure}
    \centering
    \includegraphics[width=\linewidth]{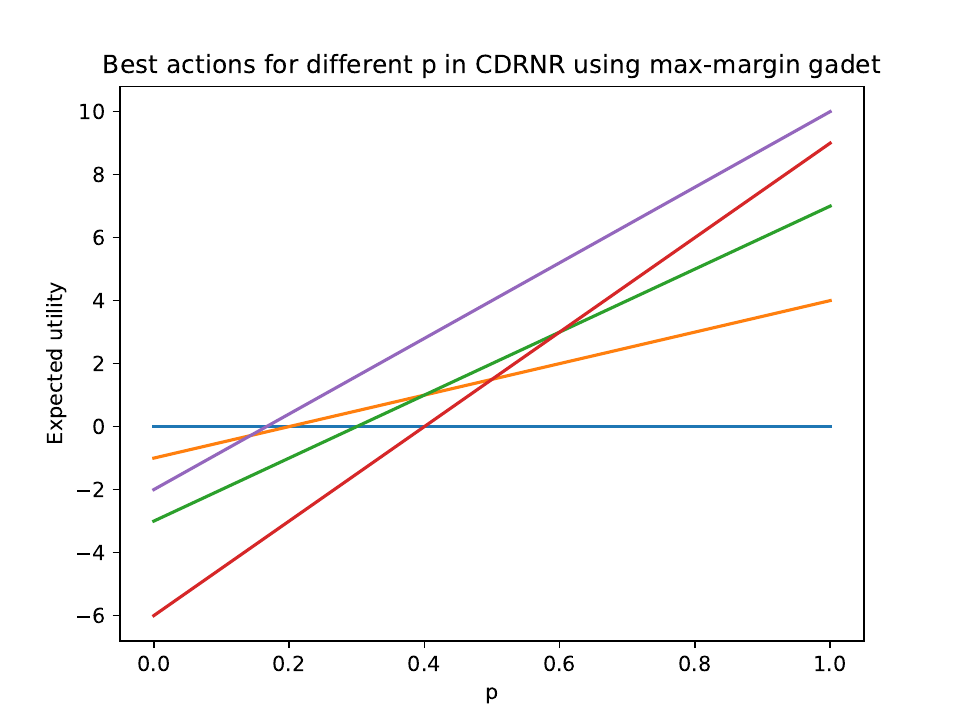}
    \includegraphics[width=\linewidth]{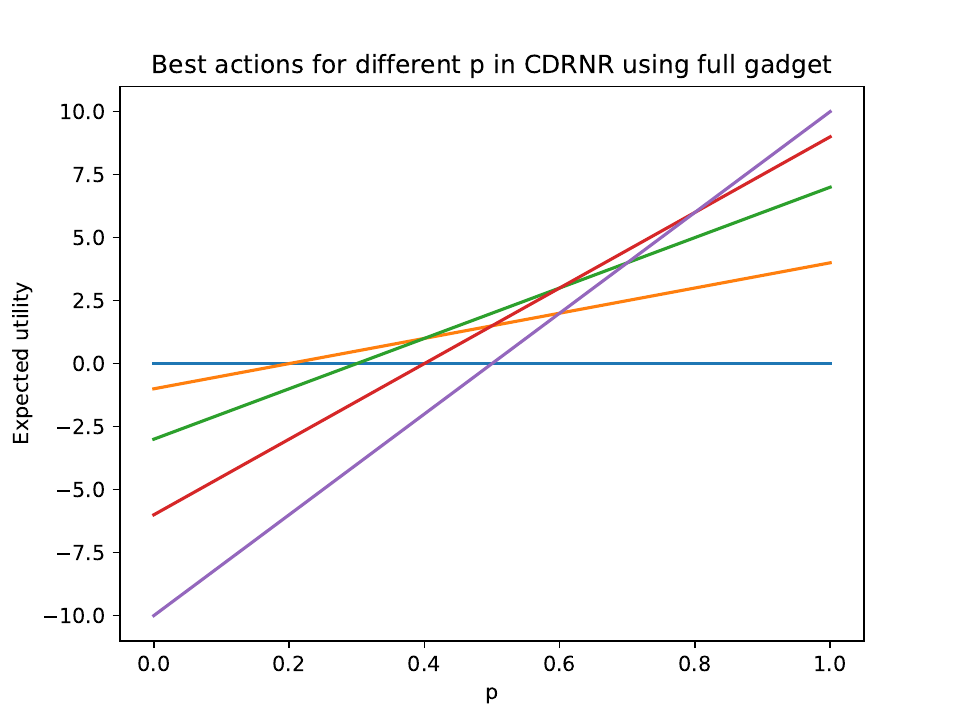}
    \includegraphics[width=\linewidth]{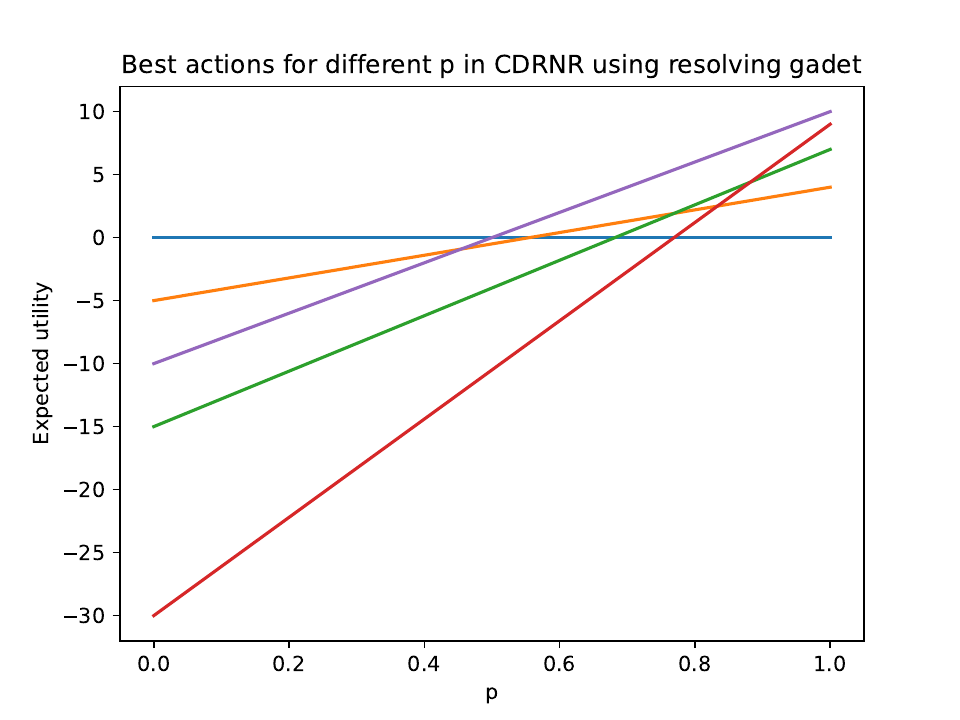}
    \caption{Value of each action based on $p$ in the games shown in Figure \ref{fig:example_game}, max-margin gadget in Game 1 (top), full gadget on both Game 1 and 2 (middle) and resolving gadget on Game 2 (bottom)}
    \label{fig:bestactions}
\end{figure}

\section{Experiment Details}
\subsection{Experimental Setup}
For all experiments, we use Python 3.8 and C++17. We solved linear programs using Gurobi 9.0.3, and experiments were done on an Intel i7 1.8GHz CPU with 8GB RAM. We used Leduc Hold'em, imperfect information Goofspiel 5, and Liar's dice for the smaller detailed experiments. We used Goofspiel 6 for the large experiment, and we only ran it against the strategy generated by the CFR with three iterations. We used the torch library for the neural network experiment. For most of the experiments, we wanted to solve the concepts perfectly with perfect value function, so we used LP and fixed the parts of the game that needed to be fixed. For the neural network experiment with CDBR, we used CFR+ to solve the subgame and the neural network as a value function. For the value function experiment in CDRNR, we used CFR+ with 1000 iterations to solve the game and CFR+ with 500 iterations as a value function in the subgames.

\subsection{Domain Definition}
\textbf{\textit{Leduc Hold'em}} is a medium-sized poker game. Both players give one chip at the beginning of the match and receive one card from a deck with six cards of 2 suits and three ranks. Then players switch and can call or bet. After a bet, the opponent can also fold, which finishes the game, and he forfeits all the staked money. After both players call or after at most two bets public card is revealed, and another betting round begins. In the first round, the bet size is two, and in the second, it is 4. If the game ends without anyone, folding cards are compared, and the player with pair always wins, and if there is no pair, the player with the higher card wins. If both have the same card, the money is split. \textbf{\textit{Goofspiel}} is a bidding card game where players are trying to obtain the most points. Cards are shuffled and set face-down. Both players have $K$ cards with values from 1 to $K$. These cards may be used as a bid. After bidding with that card, the player cannot play it again. Each turn, the top point card is revealed, and players simultaneously play a bid card; the point card is given to the highest bidder or discarded if the bids are equal. In this implementation, we use a fixed deck with K = 5 and K = 6. \textbf{\textit{Liar's Dice}} is a game where players have some number of dice and secretly roll. The first player bids rolled numbers, and the other player can either bid more or disbelieve the first player. When bidding ends with disbelief action, both players show dice. If the bid is right, the caller loses one die, and if the bidder is wrong, the bidder loses one die. Then the game continues, but for our computation, we use a version that ends with the loss of a die, and we use only a single die with four sides for each player.

\section{Proofs}
\begin{lemma}
Let $G$ be zero-sum imperfect-information extensive-form game with perfect recall. Let $\sigma_\ps^\text{F}$ be fixed opponent's strategy, let $T$ be some trunk of the game. If we perform CFR iterations in the trunk for player $\pr$ then for the best iterate $\hat{\sigma}_{\pr}$ $\max_{\sigma_{\pr}^* \in \Sigma_\pr} u_\pr(\sigma_{\pr}^*, \sigma_\ps^F)_V^T - u_\pr(\hat{\sigma}_{\pr}, \sigma_\ps^F)_V^T \leq \Delta\sqrt{\frac{A}{T}}|\mathcal{I}_{TR}| + N_S\epsilon_S$ where $\Delta$ is variance in leaf utility, $A$ is an upper bound on the number of actions, $|\mathcal{I}_{TR}|$ is number of information sets in the trunk, $N_S$ is the number of information sets at the root of any subgame and value function error is at most $\epsilon_S$.
\end{lemma}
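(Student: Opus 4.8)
The plan is to treat CFR on the trunk as an online-learning process and to combine the classical counterfactual-regret bound with two facts specific to our setting: a fixed opponent turns average-strategy convergence into best-\emph{iterate} convergence, and the only additional error is the one introduced by evaluating the depth limit with an imperfect value function. Concretely, I would run CFR/CFR+ for $T$ iterations on the trunk, producing strategies $\sigma_\pr^1,\dots,\sigma_\pr^T$; at iteration $t$ regret matching observes the counterfactual-value vectors $r^t$ obtained by propagating the value-function outputs at the depth limit back through the trunk under the current profile $(\sigma_\pr^t,\sigma_\ps^F)$. The standard Zinkevich-style analysis then bounds the average external regret summed over the trunk information sets by $\bar R^T \le \Delta\sqrt{A/T}\,|\mathcal{I}_{TR}|$, with $\Delta$ the leaf-value range, $A$ the action bound, and $|\mathcal{I}_{TR}|$ the number of trunk information sets. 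This produces the first term.

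The central observation is that, because $\sigma_\ps^F$ is fixed, this average guarantee upgrades to a best-iterate guarantee without ever averaging strategies. For any fixed comparator $\sigma_\pr^*$ the quantity $u_\pr(\sigma_\pr^*,\sigma_\ps^F)_V^T$ is constant across iterations, so averaging the external-regret inequality over $t$ gives
\begin{equation*}
\max_{\sigma_\pr^*\in\Sigma_\pr} u_\pr(\sigma_\pr^*,\sigma_\ps^F)_V^T - \frac{1}{T}\sum_{t=1}^{T} u_\pr(\sigma_\pr^t,\sigma_\ps^F)_V^T \;\le\; \bar R^T .
\end{equation*}
Since the reported iterate $\hat\sigma_\pr$ is chosen to maximize the realized trunk utility, it satisfies $u_\pr(\hat\sigma_\pr,\sigma_\ps^F)_V^T \ge \frac{1}{T}\sum_t u_\pr(\sigma_\pr^t,\sigma_\ps^F)_V^T$, and substituting this lower bound collapses the left-hand side to exactly the best-response gap we want to control. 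This is precisely the statement that we may keep only the single best iteration instead of maintaining a running average.

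The remaining $N_S\epsilon_S$ term accounts for the value function. I would bound the distortion of every counterfactual value caused by the per-information-set error $\epsilon_S$ across the $N_S$ information sets at a subgame root: since reach probabilities are at most one, propagating these errors up to the root distorts the trunk utility by at most $N_S\epsilon_S$. The step I expect to be the main obstacle is making the collapse above rigorous despite the fact that $V$ takes the players' ranges as input, so player $\pr$'s own reaches make the depth-limit values change between iterations. Consequently the reward sequence $r^t$ fed to regret matching is non-stationary, and the comparator term $\langle\sigma_\pr^*,r^t\rangle$ evaluates $\sigma_\pr^*$'s actions against the border values induced by $\sigma_\pr^t$ rather than by $\sigma_\pr^*$; I would handle this mismatch by comparing to the true optimal value function, for which the border values are the correct continuation values and the reward sequence behaves like fixed leaf utilities, and charging the gap between the two to the $\epsilon_S$ approximation error — which is exactly what yields the additive $N_S\epsilon_S$ contribution.
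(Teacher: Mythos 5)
Your proposal takes essentially the same route as the paper's proof: a CFR regret bound for the trunk that includes the $N_S\epsilon_S$ value-function term, followed by the observation that a fixed opponent makes the loss time-independent, so the best iterate (max over iterations is at least the average) inherits the average-regret guarantee. The only difference is that the paper obtains both ingredients by citation rather than derivation --- Theorem~2 of Burch et al.\ (2014) supplies the regret-plus-$N_S\epsilon_S$ bound, which is precisely what absorbs the range-dependence/non-stationarity subtlety you flag in your final paragraph, and Lemma~2 of Lockhart et al.\ (2019) is exactly your max-versus-average collapse --- so your inline sketch of the Burch step is the one place where your argument is less airtight than simply invoking that theorem.
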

\begin{proof}
Using Theorem~2 from \cite{burch2014solving} we know that regret for player $\pr$ is bounded $R_\pr^T = \frac{1}{T}\max_{\sigma_\pr^* \in \Sigma_\pr}\sum_{t=1}^T(u_\pr(\sigma_\pr^*, \sigma_\ps^F)_V^T - u_\pr(\sigma_\pr^t, \sigma_\ps^F))_V^T \leq \Delta\sqrt{AT}|\mathcal{I}_{TR}| + TN_S\epsilon_S$. Then we can directly map the regret to a different regret, that uses time-independent loss function $l(\sigma_\pr) = -u_\pr(\sigma_\pr, \sigma_\ps^F)_V^T$. We can then use Lemma~2 from \cite{lockhart2019computing} and we get $l(\hat{\sigma}_\pr) - \min_{\sigma_\pr^* \in \Sigma_\pr}l(\sigma_\pr^*) \leq \frac{R_\pr^T}{T}$. Substituting $l$ and $R_\pr^T$ back we get 
\[
    \max_{\sigma_{\pr}^* \in \Sigma_\pr} u_\pr(\sigma_{\pr}^*, \sigma_\ps^F)_V^T - u_\pr(\hat{\sigma}_{\pr}, \sigma_\ps^F)_V^T \leq \Delta\sqrt{\frac{A}{T}}|\mathcal{I}_{TR}| + N_S\epsilon_S
\]
\end{proof}

\begin{theorem}
Let $G$ be zero-sum extensive-form game with perfect recall. Let $\sigma_\ps^\text{F}$ be fixed opponent's strategy, let $\mathcal{P}$ be any subgame partitioning of the game $G$ and let $\sigma_\pr^{\sbr}$ be a CDBR given approximation $\bar{V}$ of the optimal value function $V$ with error at most $\epsilon_V$, partitioning $\mathcal{P}$ and opponent strategy $\sigma_\ps^\text{F}$ approximated in each step with regret at most $\epsilon_R$, formally $\sigma_\pr^{\sbr} = \sigma_\pr^\sbr(\sigma_\ps^F)_V^\mathcal{P}$. Let $\sigma^{NE}$ be any Nash equilibrium. Then $u_\pr(\sigma_\pr^{\sbr}, \sigma_\ps^F) + |\mathcal{S}|\epsilon_R + \sum_{S \in \mathcal{S'}}|I_{S^B}|\epsilon_V \geq u_\pr(\sigma^{NE})$.
\end{theorem}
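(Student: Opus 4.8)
The plan is to compare the computed strategy $\sigma_\pr^{\sbr}$ against the equilibrium strategy $\sigma_\pr^{NE}$, but to run the comparison \emph{inside} the depth-limited game closed off by the value function rather than in the fully played-out game, and then to transfer the resulting bound back to the true utility $u_\pr(\sigma_\pr^{\sbr},\sigma_\ps^F)$ by induction over the tree of subgames in $\mathcal{S}$. The conceptual ingredient that makes $\sigma_\pr^{NE}$ a legitimate yardstick is that the \emph{optimal} value function $V$ returns Nash continuation values: consequently, for every trunk $T_i$ the augmented game ``$T_i$ terminated by $V$'' has the same value as $G$, and $\sigma_\pr^{NE}$ restricted to $T_i$ is an optimal strategy in it. Zero-sum safety of the equilibrium then yields, against the fixed opponent, $u_\pr(\sigma_\pr^{NE},\sigma_\ps^F)_V^{T_i}\ge u_\pr(\sigma^{NE})$; that is, playing the equilibrium secures at least the value of the game even when the continuation is summarized by $V$. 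Because under imperfect information an equilibrium need not be subgame perfect, it is important to phrase this safety statement at the level of the value-function-augmented trunk (whose value provably equals that of $G$) and not by naive per-subgame decomposition.

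First I would establish the one-step estimate at a single subgame $S$ with the strategy above $S$ already fixed by the algorithm. By construction CDBR chooses its strategy in $S$ to approximately maximize $u_\pr(\cdot,\sigma_\ps^F)_{\bar V}^{S}$, and Lemma~\ref{lem:convergence} guarantees that the kept (highest-utility) iterate is within the CFR regret $\epsilon_R$ of this maximum, hence does at least as well as the equilibrium comparator up to $\epsilon_R$. Replacing the approximate value function $\bar V$ by the optimal $V$ costs at most the reach-weighted border error, bounded by $|I_{S^B}|\epsilon_V$; combining this with the safety statement above gives, for a non-terminal $S$, a lower bound of the form ``value of $S$ under $V$, minus $\epsilon_R$, minus $|I_{S^B}|\epsilon_V$''. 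For a terminal subgame no value function is invoked, $S^B=\emptyset$, and the estimate degenerates to best-response-up-to-$\epsilon_R$, which by safety is at least the equilibrium value of $S$; this is the base case of the induction.

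Then I would chain these one-step estimates. Writing the actual utility of $\sigma_\pr^{\sbr}$ as the reach-weighted sum, over the borders of $S$, of the true continuation values, the inductive hypothesis replaces each continuation value by the output of $V$ used at that border, minus the errors already accumulated deeper in the tree. Since $V$ is precisely the quantity the algorithm inserted at that border, this converts the actual utility into the $V$-evaluation of CDBR inside $S$ at the price of the descendant error terms; feeding in the one-step estimate closes the induction. Summed over the whole tree this produces $|\mathcal{S}|\epsilon_R$ from the per-subgame regret and $\sum_{S\in\mathcal{S'}}|I_{S^B}|\epsilon_V$ from the per-border value-function error, and at the root yields $u_\pr(\sigma_\pr^{\sbr},\sigma_\ps^F)\ge u_\pr(\sigma^{NE})-|\mathcal{S}|\epsilon_R-\sum_{S\in\mathcal{S'}}|I_{S^B}|\epsilon_V$, which is the claim.

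I expect the main obstacle to be the bookkeeping of the value-function error so that each border contributes only a single $|I_{S^B}|\epsilon_V$. The naive comparison swaps $\bar V$ for $V$ once for $\sigma_\pr^{\sbr}$ and once for the equilibrium comparator, whose boundary reaches differ, which would double the coefficient. Obtaining the stated constant requires arranging the telescoping so that the border term used to certify the comparator's safety is the \emph{same} one that appears when converting the actual utility into the $V$-evaluation, rather than charging them separately; equivalently, one must keep the $\bar V$-to-$V$ replacement on a single side of each inequality as the induction unwinds. Verifying that the reach weights behave correctly here, and that terminal branches (with empty $S^B$) drop out cleanly so that the sum runs over $\mathcal{S'}$ rather than $\mathcal{S}$, is the part that needs the most care.
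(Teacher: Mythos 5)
Your proposal is correct and is essentially the paper's own argument: equilibrium security evaluated in the $V$-terminated trunk, per-step approximate optimality from Lemma~\ref{lem:convergence} costing $\epsilon_R$ plus $|I_{S^B}|\epsilon_V$, and a telescoping induction over the resolved subgames in which the last subgame contributes no value-function term (your induction runs bottom-up over the subgame tree where the paper's runs top-down over steps, but it is the same telescope, and your trunk-level justification of $u_\pr(\sigma_\pr^{NE},\sigma_\ps^F)_V^{T_1}\ge u_\pr(\sigma^{NE})$ is actually cleaner than the paper's, which detours through the true utility $u_\pr(\sigma_\pr^{NE},\sigma_\ps^F)$). The one step to make explicit when writing it out is the link that closes the induction at non-root subgames: there the comparator must be the \emph{subgame} equilibrium for the reaches CDBR actually generates, whose secured value against $\sigma_\ps^F$ equals the $V$-output inserted at the previous border (your ``value of $S$ under $V$''), which is exactly how the paper phrases it ($\sigma^{NE_{S_{i-1}}}_\pr$ together with replacing the subgame equilibrium by $V$); the full-game equilibrium yardstick of your first paragraph is usable only at the root trunk, precisely because of the non-decomposability you yourself point out.
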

\begin{proof}
Using subgame partitioning $P$, let $T_1$ be the trunk of the game. From the properties of a NE $u_\pr(\sigma^{NE}) \leq u_\pr(\sigma_\pr^{NE}, \sigma_\ps^\text{F}) \leq u_\pr(\sigma_\pr^{NE}, \sigma_\ps^\text{F})^{T_1}_V$. To compute CDBR we are maximizing in the trunk and using error in the value function and non-zero regret of the computed strategy $u_\pr(\sigma_\pr^{\sbr}, \sigma_\ps^\text{F})^{T_1}_V + |I_{T^B_1}|\epsilon_V + \epsilon_R \geq u_\pr(\sigma_\pr^{NE}, \sigma_\ps^\text{F})^{T_1}_V$. We continue using induction over steps with induction assumption that in step $i$, $u_\pr(\sigma_\pr^{\sbr}, \sigma_\ps^\text{F})^{T_i}_V + \epsilon_i \geq u_\pr(\sigma^{NE})$. We already know it holds for $T_1$. Now we assume we have trunk $T_{i-1}$ for which the induction step holds and trunk $T_i$ which is $T_{i-1}$ joined with new subgame $S_{i-1}$. Our algorithm recovers approximate equilibrium in $S_{i-1}$ using $\bar{V}$ at the boundary $S^B_{i-1}$, which means $u_\pr(\sigma^{\sbr}_\pr, \sigma_\ps^F)_V^{T_i} + |I_{S^B_{i-1}}|\epsilon_V + \epsilon_R \geq u_\pr(\sigma^{NE_{S_{i-1}}}_\pr \cup \sigma_\pr^{\sbr_{T_{i-1}}}, \sigma_\ps^F)_V^{T_i}$. If we use equilibrium for the opponent in the subgame $S_{i-1}$ we can replace equilibrium in the subgame by the value function $V$ and we have $u_\pr(\sigma^{NE_{S_{i-1}}}_\pr \cup \sigma_\pr^{\sbr_{T_{i-1}}}, \sigma_\ps^F)_V^{T_i} \geq u_\pr(\sigma_\pr^{\sbr}, \sigma_\ps^\text{F})^{T_{i-1}}_V$ and joining it all together we have $u_\pr(\sigma_\pr^{\sbr}, \sigma_\ps^\text{F})^{T_{i-1}}_V \leq \textbf{\textbf{}}u_\pr(\sigma_\pr^{\sbr}, \sigma_\ps^\text{F})^{T_i}_V + |I_{S^B_{i-1}}|\epsilon_V + \epsilon_R$ and $u_\pr(\sigma^{NE}) \leq \textbf{\textbf{}}u_\pr(\sigma_\pr^{\sbr}, \sigma_\ps^\text{F})^{T_i}_V + |I_{S^B_{i-1}}|\epsilon_V + \epsilon_R + \epsilon_{i-1}$. Accumulating the errors through the subgames will give the desired result $u_\pr(\sigma_\pr^{\sbr}, \sigma_\ps^F) + |\mathcal{S}|\epsilon_R + \sum_{S \in \mathcal{S'}}|I_{S^B}|\epsilon_V \geq u_\pr(\sigma^{NE})$ We omit last subgame from the accumulated value function error because the last step does not use value function.
\end{proof}

\begin{theorem}
Let $G$ be any zero-sum extensive-form game with perfect recall and let $\sigma_\ps^\text{F}$ be any fixed opponent's strategy in $G$. Then we set $G^M$ as restricted Nash response modification of $G$ using $\sigma_\ps^\text{F}$. Let $\mathcal{P}$ be any subgame partitioning of the game $G^M$ and using some $p \in \langle0,1\rangle$, let $\sigma_\pr^{\srnr}$ be a CDRNR given approximation $\bar{V}$ of the optimal value function $V$ with error at most $\epsilon_V$ and opponent strategy $\sigma_\ps^\text{F}$ approximated in each step with regret at most $\epsilon_R$, formally $\sigma_\pr^{\srnr} = \sigma_\pr^\srnr(\sigma_\ps^F, p)_V^\mathcal{P}$. Let $\sigma^{NE}$ be any Nash equilibrium in $G$. Then $u_\pr(\sigma_\pr^{\srnr}, \sigma_\ps^F) + \sum_{S \in \mathcal{S'}}|I_{S^O}|(1-p)\epsilon_V + |\mathcal{S}|\epsilon_R + \sum_{S \in \mathcal{S'}}|I_{S^B}|\epsilon_V \geq u_\pr(\sigma^{NE})$.
\end{theorem}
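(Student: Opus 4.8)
\noindent
The plan is to mirror the inductive argument used for the analogous CDBR theorem in the appendix, but to carry it out on the restricted-Nash modified game $G^M$ rather than on $G$ directly, so that the opponent's freedom to best respond inside the $G'$ copy is tracked explicitly. Along the increasing trunks $T_1^M \subset T_2^M \subset \cdots$ I would maintain the invariant that the value-function-evaluated restricted-Nash value in the current trunk, $u_\pr(\sigma_\pr^{\srnr}, BR(\sigma_\pr^{\srnr}))_V^{T_i^M}$, stays at least $u_\pr(\sigma^{NE})$ up to an error $\epsilon_i$ that grows by one regret term and the appropriate value-function terms at each resolving step. Accumulating these errors over $\mathcal{S}$, and then relating the modified-game value to the deployment value against $\sigma_\ps^F$, yields the claimed bound.

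\noindent
For the base case I would use that $\sigma_\pr^{NE}$ secures the game value in both copies: in $G^F$ the opponent is fixed, so $\sigma_\pr^{NE}$ earns at least $u_\pr(\sigma^{NE})$, and in $G'$ the opponent best responds but $\sigma_\pr^{NE}$ is unexploitable, so it earns exactly $u_\pr(\sigma^{NE})$; hence the $p$-weighted mixture is at least $u_\pr(\sigma^{NE})$. Replacing the tails by the optimal value function can only keep this at or above the game value, since the value function returns Nash continuation values and the opponent is weakly suboptimal wherever the trunk is ``opened up'': in $G^F$ the fixed strategy is suboptimal, while in $G'$ opening up against a best response leaves the value unchanged. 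Because $\sigma_\pr^{\srnr}$ is the approximate maximizer of the restricted-Nash objective in the trunk, it does at least as well as $\sigma_\pr^{NE}$ up to the regret $\epsilon_R$ and the boundary value-function error.

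\noindent
The inductive step opens the current subgame $S_i^M$: I would compare playing the subgame Nash continuation for $\pr$ (with $\sigma_\pr^{\srnr}$ fixed above) against the next-level value function, to summarizing the whole subgame by the value function at its entrance. In the $G^F$ part this is the CDBR inequality verbatim, contributing $|I_{S^B}|\epsilon_V$ at the forward border. The new ingredient is that, in $G'$, the opponent's best response can branch to public states lying outside the forward path of $S_i^M$ --- exactly the states $S^O$ --- where the value function is invoked as well; retaining the previously solved $G'$ parts guarantees these are evaluated at the values demanded by Definition~\ref{def:value}, and each such information set contributes $(1-p)\epsilon_V$, the factor $1-p$ arising from its residence in the $G'$ copy. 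Summing the forward-border terms, the off-path terms, and one $\epsilon_R$ per element of $\mathcal{S}$ produces the three error sums in the statement, with the final subgame omitted from the value-function sums because no value function is used there.

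\noindent
The step I expect to be the main obstacle is twofold. First is the correctness of the off-path values at $S^O$: this is precisely the condition of Definition~\ref{def:value}, which Section~\ref{sec:gadgets} shows fails for standard gadgets, so the argument must lean on keeping the solved $G'$ parts rather than on a constant-size gadget. Second is passing from the guarantee on the modified-game value to the deployment value $u_\pr(\sigma_\pr^{\srnr}, \sigma_\ps^F)$ against the actual fixed opponent: here I would use that the $G'$ best-response value never exceeds $u_\pr(\sigma^{NE})$, forcing the $G^F$ copy to carry the slack, and then track the weights $p$ and $1-p$ through the accumulated errors so as to land on the stated sums. Getting this bookkeeping right --- reconciling the $(1-p)$ weighting of the $S^O$ term with the unweighted $S^B$ and $\epsilon_R$ terms --- rather than any single inequality, is the delicate part of the proof.
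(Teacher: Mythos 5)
Your overall skeleton matches the paper's proof essentially step for step: induction over the growing trunks of $G^M$ with the invariant $u^{G^M}_\pr(\sigma_\pr^\srnr, BR(\sigma_\pr^\srnr))^{T^M_i}_V + \epsilon_i \geq u^G_\pr(\sigma^{NE})$, a base case resting on the fact that $\sigma_\pr^{NE}$ secures the game value in both copies of $G^M$, the CDBR-style accounting of $|I_{S^B}|\epsilon_V + \epsilon_R$ per step in $G^F$, and the extra $(1-p)|I_{S^O}|\epsilon_V$ terms for the opponent's $G'$ deviations that exit the kept trunk through value-function states. All of that is the paper's argument.

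The gap is in your final step, passing from the modified-game guarantee to the deployment value $u^G_\pr(\sigma_\pr^\srnr, \sigma_\ps^F)$. You propose to use that the $G'$ best-response value never exceeds $u_\pr(\sigma^{NE})$ and let the $G^F$ copy ``carry the slack.'' Write this out: since $u^{G^M}_\pr(\sigma_\pr^\srnr, BR(\sigma_\pr^\srnr)) = p\, u^G_\pr(\sigma_\pr^\srnr, \sigma_\ps^F) + (1-p)\, u^G_\pr(\sigma_\pr^\srnr, BR(\sigma_\pr^\srnr))$, bounding the second term by $u_\pr(\sigma^{NE})$ and combining with the invariant $u^{G^M}_\pr(\sigma_\pr^\srnr, BR(\sigma_\pr^\srnr)) + E \geq u_\pr(\sigma^{NE})$ (with $E$ the accumulated error) yields only $p\, u^G_\pr(\sigma_\pr^\srnr, \sigma_\ps^F) + E \geq p\, u_\pr(\sigma^{NE})$, i.e., the theorem's inequality with every error term inflated by $1/p$ (vacuous as $p \to 0$). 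No bookkeeping of the $p$ versus $1-p$ weights inside the error sums can repair this, because the loss occurs after the errors are already accumulated. The paper instead uses a one-line monotonicity argument you did not find: in $G'$ the best-responding $\ps$ is the minimizer of $\pr$'s utility, so replacing $BR(\sigma_\pr^\srnr)$ by $\sigma_\ps^F$ inside $G'$ can only increase $\pr$'s utility; after this replacement both copies of $G^M$ contain $\sigma_\ps^F$, so the modified game collapses to $G$ played against $\sigma_\ps^F$, giving $u^G_\pr(\sigma_\pr^\srnr, \sigma_\ps^F) \geq u^{G^M}_\pr(\sigma_\pr^\srnr, BR(\sigma_\pr^\srnr))$ directly and hence the stated bound with unscaled errors. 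With that substitution your proof goes through; without it, the bound you reach is strictly weaker than the theorem.
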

\begin{proof}
Let $T^M_1$ be a trunk of a modified game $G^M$ using partitioning $\partition$. We will use $u^G(\sigma)$ as utility in $G$. Utility of player $\pr$ for playing Nash equilibrium of the $G$ in trunk $T_1$ will be higher or the same as game value of $G$, formally $u^G_\pr(\sigma^{NE}) \leq u^{G^M}_\pr(\sigma^{NE})^{T^M_1}_V$. To compute CDRNR we use the approximate value function. In the fixed part of the game $G^F$ the value will be worse at most by sum of errors as in the CDBR case. However, in the $G'$ the situation is more complicated and we use Theorem~2 from \cite{burch2014solving} to bound the utility increase, resulting in $u^{G^M}_\pr(\sigma_\pr^{\srnr}, BR(\sigma_\pr^{\srnr}))^{T_i}_V + |I_{T^{M,B}_1}|\epsilon_V + |I_{T^M_1}|\epsilon_R \geq u^G_\pr(\sigma^{NE})$. We continue using induction over steps with induction assumption that in step $i$, $u^{G^M}_\pr(\sigma_\pr^{\srnr}, BR(\sigma_\pr^{\srnr}))^{T^M_i}_V + \epsilon_i \geq u^G_\pr(\sigma^{NE})$ and we already showed it holds for $T_1$. Now we assume we have trunk $T_{i-1}$ for which the induction step holds and trunk $T_i$ which is $T_{i-1}$ joined with new subgame $S_{i-1}$. Our algorithm recovers approximate equilibrium in $S^M_{i-1}$ and we want similar equation as for the CDBR. Part of the game tree $G^F$ has the errors bounded as in CDBR but because we use gadget in the $G'$ we need to also consider error in actions ending with value function player $\ps$ can play in the top with error bounded by $\epsilon_V$. We have $|I_{S^O}|$ of actions leading out of the tree so the error increase in the $G'$ going to the next subgame is at most $|I_{S^O}|\epsilon_V + |I_{S^{M,B}_{i-1}}|\epsilon_V + |I_{S^M_{i-1}}|\epsilon_R$ which together gives us $u^{G^M}_\pr(\sigma'_\pr, BR(\sigma'_\pr))_V^{T_i} + (1-p)|I_{S^O}|\epsilon_V + |I_{S^{M,B}_{i-1}}|\epsilon_V + |I_{S^M_{i-1}}|\epsilon_R \geq u^{G^M}_\pr(\sigma^\srnr_\pr, BR(\sigma^\srnr_\pr))_V^{T_{i-1}}$, where $\sigma_\pr'$ is a combination of the strategy we approximated in the subgame and the fixed strategy from previous step, formally $\sigma_\pr' = \sigma_\pr^{S_{i-1}} \cup \sigma_\pr^{\srnr, T_{i-1}}$. Joining it with the induction assumption we have $u^{G^M}_\pr(\sigma^\srnr_\pr, BR(\sigma^\srnr_\pr))_V^{T_i} + (1-p)|I_{S^O}|\epsilon_V + |I_{S^{M,B}_{i-1}}|\epsilon_V + |I_{S^M_{i-1}}|\epsilon_R + \epsilon_{i-1} \geq u^G_\pr(\sigma^{NE})$. Accumulating the errors in the last subgame we have $u^{G^M}_\pr(\sigma^\srnr_\pr, BR(\sigma^\srnr_\pr)) + \sum_{S \in \mathcal{S'}}|I_{S^O}|(1-p)\epsilon_V + |\mathcal{S}|\epsilon_R + \sum_{S \in \mathcal{S'}}|I_{S^B}|\epsilon_V$. However, we still need to show it works for $u^G_\pr(\sigma_\pr^{\srnr}, \sigma_\ps^F)$. We can do it by replacing strategy of player $\ps$ in the $G'$ by $\sigma_\ps^F$ which will effectively transform $G^M$ game back to $G$ with player $\ps$ playing $\sigma_\ps^F$. Since we did this transformation by changing the strategy that was a best response the utility can only increase and $u^G_\pr(\sigma_\pr^{\srnr}, \sigma_\ps^F) \geq u^{G^M}_\pr(\sigma^\srnr_\pr, BR(\sigma_\pr^\srnr))$ which concludes the proof.
\end{proof}

\begin{theorem}
Let $G$ be any zero-sum extensive-form game with perfect recall and let $\sigma_\ps^\text{F}$ be any fixed opponent's strategy in $G$. Then we set $G^M$ as restricted Nash response modification of $G$ using $\sigma_\ps^\text{F}$. Let $\mathcal{P}$ be any subgame partitioning of the game $G^M$ and using some $p \in \langle0,1\rangle$, let $\sigma_\pr^{\srnr}$ be a CDRNR given approximation $\bar{V}$ of the optimal value function $V$ with error at most $\epsilon_V$, partitioning $\mathcal{P}$ and opponent strategy $\sigma_\ps^\text{F}$, which is approximated in each step with regret at most $\epsilon_R$, formally $\sigma_\pr^{\srnr} = \sigma_\pr^\srnr(\sigma_\ps^F, p)_V^\mathcal{P}$. Then exploitability has a bound $\mathcal{E}(\sigma_\pr^\srnr) \leq \mathcal{G}(\sigma_\pr^\srnr, \sigma_\ps^F)\frac{p}{1-p} \sum_{S \in \mathcal{S'}}|I_{S^O}|(1-p)\epsilon_V + |\mathcal{S}|\epsilon_R + \sum_{S \in \mathcal{S'}}|I_{S^B}|\epsilon_V$, $\mathcal{E}$ and $\mathcal{G}$ are defined in Section~\ref{sec:background}.
\end{theorem}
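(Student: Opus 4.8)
The plan is to derive the exploitability bound from the same near-optimality of $\sigma_\pr^{\srnr}$ in the modified game $G^M$ that already drives Theorem~\ref{thm:SRNR nash}, and then to read off gain and exploitability as the two components of the $G^M$-utility weighted by $p$ and $1-p$. Write $v = u_\pr(\sigma^{NE})$ for the game value and let $E = \sum_{S \in \mathcal{S'}}|I_{S^O}|(1-p)\epsilon_V + |\mathcal{S}|\epsilon_R + \sum_{S \in \mathcal{S'}}|I_{S^B}|\epsilon_V$ denote the accumulated value-function and regret error. The chain of estimates used to prove Theorem~\ref{thm:SRNR nash} yields, before the final replacement of $BR$ by $\sigma_\ps^F$, the intermediate inequality $u^{G^M}_\pr(\sigma_\pr^{\srnr}, BR(\sigma_\pr^{\srnr})) + E \geq v$, obtained by comparing the CDRNR value in $G^M$ against the value attained by the Nash strategy $\sigma_\pr^{NE}$ (which earns at least $v$ in the fixed tree $G^F$ and exactly $v$ in the resolving tree $G'$, being unexploitable). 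I would reuse this inequality as the starting point rather than re-deriving it.

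First I would decompose the modified-game utility along the initial chance node. Since in $G^F$ the opponent is frozen to $\sigma_\ps^F$ while in $G'$ the opponent best-responds to the very same player-$\pr$ strategy, we have
\[
u^{G^M}_\pr(\sigma_\pr^{\srnr}, BR(\sigma_\pr^{\srnr})) = p\, u^{G}_\pr(\sigma_\pr^{\srnr}, \sigma_\ps^F) + (1-p)\min_{\sigma_\ps} u^{G}_\pr(\sigma_\pr^{\srnr}, \sigma_\ps).
\]
Using the definitions from Section~\ref{sec:background} together with the zero-sum property, the first term equals $v + \mathcal{G}(\sigma_\pr^{\srnr}, \sigma_\ps^F)$, and the second equals $v - \mathcal{E}(\sigma_\pr^{\srnr})$, because a worst-case $\ps$ pushes $\pr$ exactly $\mathcal{E}$ below the game value.

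Substituting the decomposition into the intermediate inequality gives $v + p\,\mathcal{G}(\sigma_\pr^{\srnr}, \sigma_\ps^F) - (1-p)\mathcal{E}(\sigma_\pr^{\srnr}) + E \geq v$. Cancelling $v$ and isolating exploitability yields $(1-p)\mathcal{E}(\sigma_\pr^{\srnr}) \leq p\,\mathcal{G}(\sigma_\pr^{\srnr}, \sigma_\ps^F) + E$, i.e. $\mathcal{E}(\sigma_\pr^{\srnr}) \leq \frac{p}{1-p}\mathcal{G}(\sigma_\pr^{\srnr}, \sigma_\ps^F) + \frac{E}{1-p}$, which is the claimed shape of the bound. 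As sanity checks I would verify the limiting regimes: $p = 0$ collapses to continual resolving with zero exploitability, and $p \to 1$ recovers CDBR with the exploitability bound becoming vacuous.

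The main obstacle is the middle step: justifying that the best response of $\ps$ confined to the $G'$ copy inside $G^M$ genuinely equals the worst-case opponent over the whole real game $G$ that defines $\mathcal{E}$. This is precisely the property Definition~\ref{def:value} demands, and it is the reason Section~\ref{sec:gadgets} insists on keeping the previously resolved $G'$ trunk instead of substituting a gadget: with an incorrect $G'$ value the identification of the $(1-p)$-weighted term with true exploitability would break, and the bound could fail in either direction. The care required is to track that retaining the trunk keeps the $G'$ value correct up to the already-accounted errors $\epsilon_V$ and $\epsilon_R$, so that the decomposition introduces no new error. A secondary point to reconcile with the exact constants stated in the theorem is the $\tfrac{1}{1-p}$ factor that the error term $E$ picks up when exploitability is isolated, which is degenerate as $p \to 1$ and should be stated explicitly.
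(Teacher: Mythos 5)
Your proof is correct and takes a genuinely different route from the paper's. The paper argues \emph{per step}: it defines telescoping per-step gain $\mathcal{G}(\cdot)_V^{T_i}$ and per-step exploitability $\mathcal{E}(\cdot)_V^{T_i}$, notes that near-optimality of each resolved subgame of $G^M$ gives the step inequality $p\,\mathcal{G}(\sigma_\pr^{\srnr},\sigma_\ps^F)_V^{T^M_i} - (1-p)\,\mathcal{E}(\sigma_\pr^{\srnr})_V^{T^M_i} + e_i \geq 0$ (with $e_i$ the per-step value-function and regret error), rearranges, and sums over steps. You instead argue \emph{globally}: you reuse the intermediate inequality $u^{G^M}_\pr(\sigma_\pr^{\srnr}, BR(\sigma_\pr^{\srnr})) + E \geq u_\pr(\sigma^{NE})$ that the paper establishes inside the proof of Theorem~\ref{thm:SRNR nash}, and combine it with the exact RNR decomposition $u^{G^M}_\pr(\sigma_\pr, BR(\sigma_\pr)) = p\,u^G_\pr(\sigma_\pr,\sigma_\ps^F) + (1-p)\min_{\sigma_\ps}u^G_\pr(\sigma_\pr,\sigma_\ps)$, valid because $\pr$'s strategy is shared across $G'$ and $G^F$ while $\ps$'s best response factorizes over the two subtrees; identifying the two terms with $v+\mathcal{G}$ and $v-\mathcal{E}$ yields the bound in one rearrangement. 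Your version is more modular, since the paper's per-step accounting essentially repeats the induction of Theorem~\ref{thm:SRNR nash} inline; the paper's version, in exchange, makes explicit where each error term enters at each resolve. Two smaller points. First, what you call the ``main obstacle'' is not actually a gap in your argument: for the final strategy in the full modified game, the $G'$ best response coincides with the worst case in $G$ by construction of $G^M$; the gadget/Definition~\ref{def:value} issue only threatens whether the \emph{computed} strategy satisfies the near-optimality inequality, and that is precisely what the trunk-keeping proof of Theorem~\ref{thm:SRNR nash} supplies, so citing it suffices. Second, your flag about the $\frac{1}{1-p}$ factor is well taken: isolating $\mathcal{E}$ requires dividing the error terms by $(1-p)$ as well, so the algebra yields $E/(1-p)$ rather than the constant $E$ appearing in the theorem statement; the paper's own reorganization makes the same slip (it divides only the gain and exploitability terms by $1-p$), so your derivation, which states the $E/(1-p)$ term explicitly, is the more careful of the two on this point.
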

\begin{proof}
We will examine the exploitability increase in each step. First, we define gain in a single step as $\mathcal{G}(\sigma_\pr, \sigma_\ps)_V^{T_i} = u_\pr(\sigma_\pr, \sigma_\ps)_V^{T_i} - u_\pr(\sigma_\pr, \sigma_\ps)_V^{T_{i-1}}$ for $i > 0$ and $\mathcal{G}(\sigma_\pr, \sigma_\ps)_V^{T_0} = u_\pr(\sigma_\pr, \sigma_\ps)_V^{T_0} - u_\pr(\sigma^{NE})$. This is consistent with full definition of gain because sum of gains over all steps will results in $u_\pr(\sigma_\pr, \sigma_\ps)^G - u_\pr(\sigma_\pr, \sigma_\ps)^{T_n}_V + u_\pr(\sigma_\pr, \sigma_\ps)^{T_n}_V - ... - u_\pr(\sigma_\pr, \sigma_\ps)^{T_0}_V + u_\pr(\sigma_\pr, \sigma_\ps)^{T_0}_V - u_\pr(\sigma^{NE}) = u_\pr(\sigma_\pr, \sigma_\ps)^G - u_\pr(\sigma^{NE}) = \mathcal{G}(\sigma_\pr, \sigma_\ps)$. We define exploitability in a single step similarly as $\mathcal{E}(\sigma_\pr)_V^{T_i} = u_\ps(\sigma_\pr, BR(\sigma_\pr))_V^{T_i} - u_\ps(\sigma_\pr, BR(\sigma_\pr))_V^{T_{i-1}}$ for $i > 0$ and $\mathcal{E}(\sigma_\pr)_V^{T_0} = u_\ps(\sigma_\pr, BR(\sigma_\pr))_V^{T_0} - u_\ps(\sigma^{NE})$ and it also sums to full exploitability. In each step we approximate the strategy in the modified game, having full utility in step written as $\mathcal{G}(\sigma_\pr^{\srnr}, \sigma^F_\ps)_V^{T^M_i}p - \mathcal{E}(\sigma_\pr^{\srnr})_V^{T^M_i}(1-p)$. If we had exact equilibrium in the subgame this would always be at least 0. However, we have $\bar{V}$ instead of $V$, values at the top of the gadget are not exact and the computed strategy has regret $\epsilon_R$. As in the previous proof the error is bounded by $|I_{S^O}|(1-p)\epsilon_V + |I_{S^{M,B}_{i-1}}|\epsilon_V + |I_{S^M_{i-1}}|\epsilon_R$ and we can write $\mathcal{G}(\sigma_\pr^{\srnr}, \sigma^F_\ps)_V^{T^M_i}p - \mathcal{E}(\sigma_\pr^{\srnr})_V^{T^M_i}(1-p) + |I_{S^O}|(1-p)\epsilon_V + |I_{S^{M,B}_{i-1}}|\epsilon_V + |I_{S^M_{i-1}}|\epsilon_R \geq 0$. We reorganize the equation to get $\mathcal{G}(\sigma_\pr^{\srnr}, \sigma^F_\ps)_V^{T^M_i}\frac{p}{1-p}  + |I_{S^O}|(1-p)\epsilon_V + |I_{S^{M,B}_{i-1}}|\epsilon_V + |I_{S^M_{i-1}}|\epsilon_R \geq \mathcal{E}(\sigma_\pr^{\srnr})_V^{T^M_i}$ and summing over all the steps gives us $\mathcal{G}(\sigma_\pr^{\srnr}, \sigma^F_\ps)\frac{p}{1-p} + \sum_{S \in \mathcal{S'}}|I_{S^O}|(1-p)\epsilon_V + |\mathcal{S}|\epsilon_R + \sum_{S \in \mathcal{S'}}|I_{S^B}|\epsilon_V \geq \mathcal{E}(\sigma_\pr^{\srnr})$
\end{proof}

\section{CDBR Against Nash Strategy}
\label{app:example}
\begin{figure}
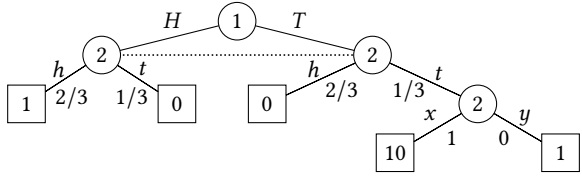

    \centering
    \begin{istgame}
    \setistOvalNodeStyle{.5cm}
    \setistRectangleNodeStyle{.5cm}
    \xtShowTerminalNodes[rectangle node]
    \istrooto(0){$\pr$}+{4.5mm}..{3.6cm}+
        \istb{H}[left, xshift=4mm, yshift=2.5mm]  \istb{T}[right, xshift=-4mm, yshift=2.5mm] \endist
    \istrooto(1)(0-1){$\ps$}+{6.5mm}..{2cm}+
        \istBt{h}[left, xshift=1mm, yshift=1.2mm]{2/3}[right, yshift=-2.5mm, xshift=-2.5mm]{1}[center]  \istBt{t}[right, xshift=-1mm, yshift=1.2mm]{1/3}[left, yshift=-2.5mm, xshift=2.5mm]{0}[center]  \endist
    \istrooto(2)(0-2){$\ps$}+{6.5mm}..{2.8cm}+
        \istBt{h}[left, xshift=1mm, yshift=1.2mm]{2/3}[right, yshift=-1.5mm, xshift=-1.5]{0}[center]  \istB{t}[right, xshift=-1mm, yshift=1.2mm]{1/3}[left, yshift=-0.8mm]  \endist
    \istrooto(6)(2-2){$\ps$}+{6.5mm}..{2.2cm}+
        \istBt{x}[left, yshift=1.5mm, xshift=1mm]{1}[right, yshift=-1.5mm]{10}[center]  \istBt{y}[right, yshift=1.5mm, xshift=-1mm]{0}[left, yshift=-1.5mm]{1}[center]  \endist
    \xtInfoset(0-1)(0-2)
    \end{istgame}
    \caption{Example of game where step best response is worse than NE against fixed strategy $\sigma(h)=\frac{2}{3}, \sigma(x)=1$.}
    \label{fig:sbr_nash_counterexample}
\end{figure}

\begin{observation}
An example in Figure~\ref{fig:sbr_nash_counterexample} shows that CDBR can perform worse than a Nash equilibrium against the fixed opponent because of the perfect opponent assumption after the depth-limit. An example is a game of matching pennies with a twist. Player $\ps$ can choose in the case of the tails whether he wants to give the opponent 10 instead of only 1. A rational player will never do it, and the equilibrium is a uniform strategy as in normal matching pennies.

Now we have an opponent model that plays $h$ with probability $\frac{2}{3}$ and always plays $x$. The best response to the model will always play $T$ and get payoff $\frac{10}{3}$. Nash equilibrium strategy will get payoff 2, and CDBR with depth-limit 2 will cut the game before the $x/y$ choice. Assuming the opponent plays perfectly after the depth-limit and chooses $y$, $\pr$ will always play $H$. Playing $H$ will result in receiving payoff $\frac{2}{3}$, which is higher than the value of the game ($\frac{1}{2}$) but lower than what Nash equilibrium can get against the model.
\label{obs:nash}
\end{observation}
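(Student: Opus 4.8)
The plan is to verify directly the three payoff numbers asserted for the game in Figure~\ref{fig:sbr_nash_counterexample}, since the observation is a self-contained counterexample rather than a general claim. First I would determine the value of the game. Because $\ps$ is the minimizer and the bottom $x/y$ node pays $10$ and $1$ respectively, a rational $\ps$ always plays $y$; this collapses the branch below $(T,t)$ to the constant $1$ and leaves an ordinary matching-pennies matrix in which $\pr$ is paid $1$ for matching (on $(H,h)$ and $(T,t)$) and $0$ otherwise. Writing $\pr$'s mixed strategy as $(p,1-p)$ and $\ps$'s as $(q,1-q)$, the expected payoff $2pq - p - q + 1$ has its saddle point at $p=q=\tfrac12$, so the uniform profile is the unique equilibrium with value $\tfrac12$; I would record this by checking mutual indifference.

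Next I would evaluate the two strategies the observation compares against the fixed model $\sigma_\ps^F$, which plays $h$ with probability $\tfrac23$ and $x$ with probability $1$. Playing $H$ gives $\tfrac23\cdot 1 + \tfrac13\cdot 0 = \tfrac23$, while playing $T$ gives $\tfrac23\cdot 0 + \tfrac13\cdot 10 = \tfrac{10}{3}$; hence the unconstrained best response chooses $T$ and scores $\tfrac{10}{3}$, and the uniform equilibrium strategy scores $\tfrac12\cdot\tfrac23 + \tfrac12\cdot\tfrac{10}{3} = 2$.

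The heart of the argument is evaluating $\sigma_\pr^\sbr$ under the depth-$2$ cut, which is exactly where the perfect-opponent assumption intervenes. By the definition of CDBR with the optimal value function, the trunk terminates before the $x/y$ node, and that node is summarised by its Nash continuation value; since a rational $\ps$ would pick $y$, the value function returns $1$ there rather than the $10$ that the irrational model actually concedes. Maximizing in the trunk therefore weighs a perceived $\tfrac23$ for $H$ against a perceived $\tfrac13$ for $T$, so CDBR commits to $H$, whose true value against the model is again $\tfrac23$. Assembling the numbers yields $\tfrac12 < \tfrac23 < 2$, i.e.\ CDBR strictly exceeds the game value (consistent with the CDBR soundness guarantee) yet strictly undershoots what the Nash strategy extracts from the same model, which is precisely the claim.

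The main obstacle is conceptual rather than computational: one must apply the CDBR definition faithfully at the depth limit and recognise that the optimal value function replaces the subgame by its equilibrium value, thereby discarding exactly the opponent's irrational generosity (the payoff $10$) that a true best response would harvest. Once this mis-evaluation is made explicit, the desired strict inequality follows immediately from the four arithmetic evaluations above.
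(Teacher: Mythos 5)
Your proposal is correct and follows essentially the same route as the paper: the observation is its own proof by direct computation on the example game, and you verify the identical quantities (game value $\tfrac12$, best response $\tfrac{10}{3}$, Nash strategy's $2$ against the model, and CDBR's choice of $H$ worth $\tfrac23$ because the optimal value function replaces the $x/y$ node by its equilibrium value $1$). The only difference is that you derive the uniform equilibrium and game value explicitly rather than asserting them, which is a harmless elaboration.
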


\section{Additional Empirical Results}
\begin{figure}[h]
    \centering
    \includegraphics[width=\linewidth]{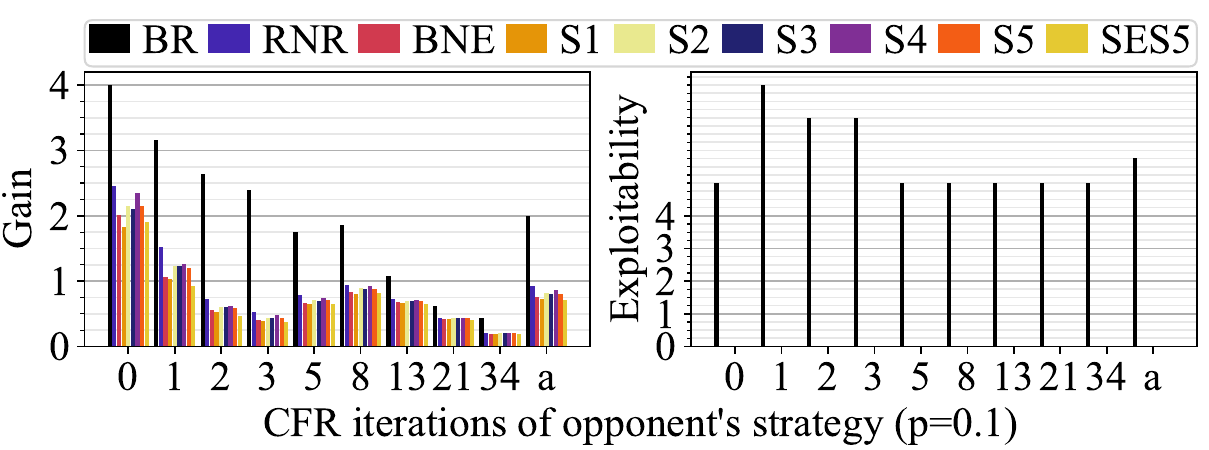}
    \includegraphics[width=\linewidth]{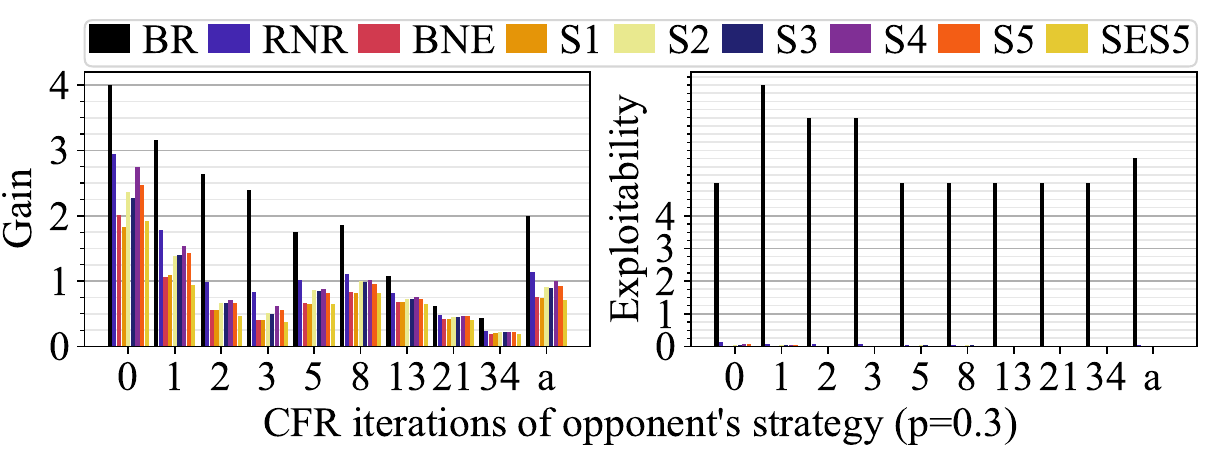}
    \includegraphics[width=\linewidth]{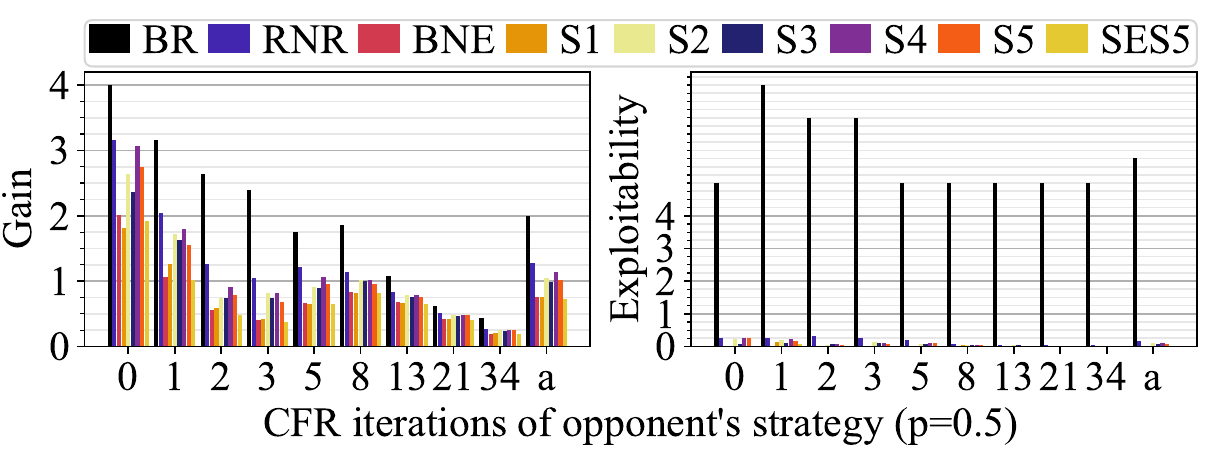}
    \includegraphics[width=\linewidth]{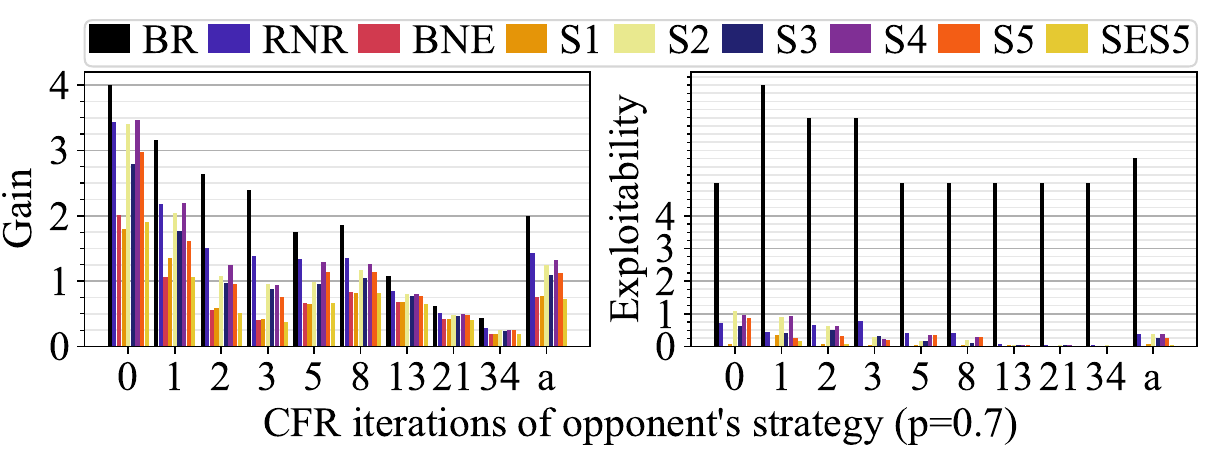}
    \includegraphics[width=\linewidth]{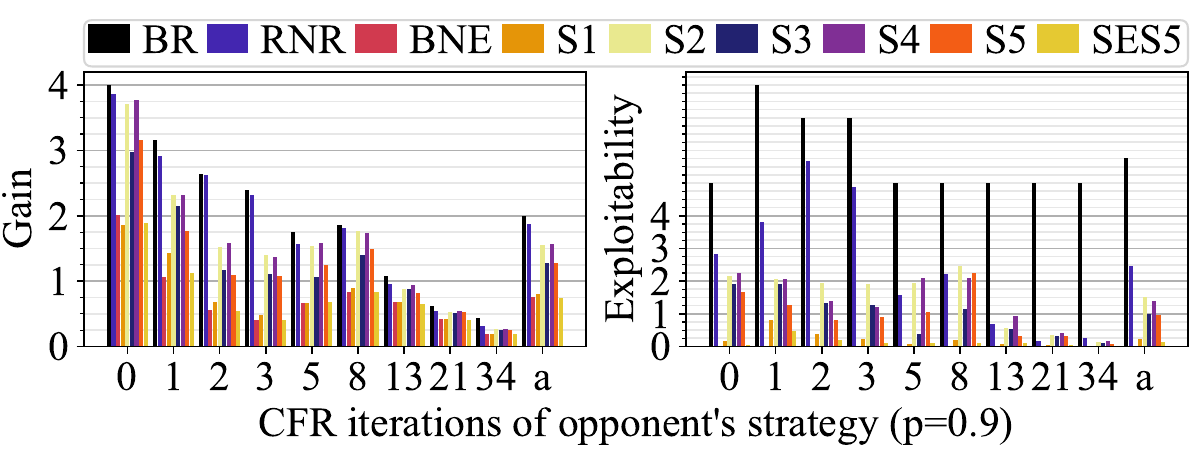}
    \caption{Additional results for CDRNR showing the performance of CDRNR with varying step-size. Generated on Goofspiel 5.}
\end{figure}

\paragraph{CDRNR}
We show more results for Goofspiel, Leduc Hold'em, and Liar's dice with different values of $p$. SX is CDRNR with step size denoted by X. We also evaluate SES and only use the highest step value of 5. Next, we show the same setup as in the main text with exactly the same partitioning as they used in SES, and we include more values of $p$.

\begin{figure}[h]
    \centering
    \includegraphics[width=\linewidth]{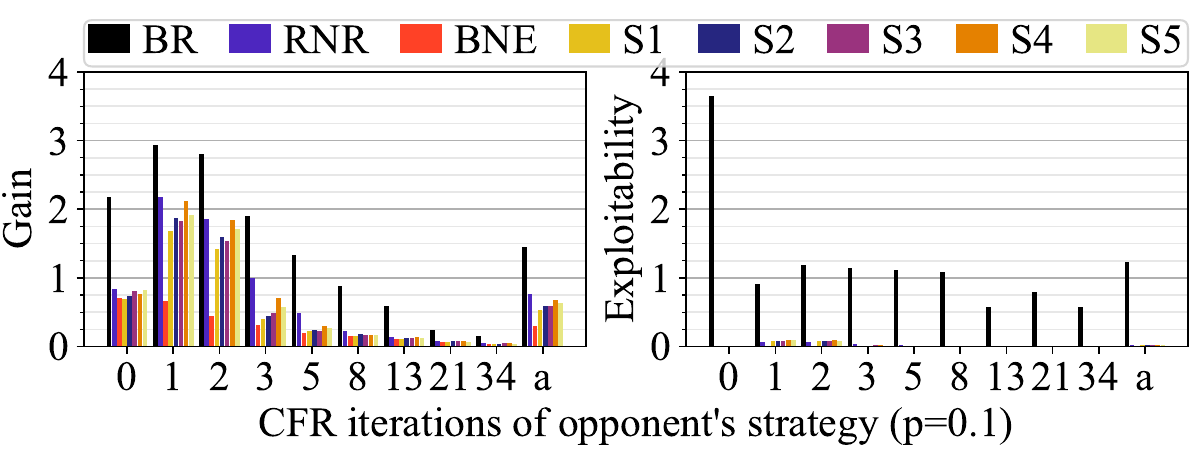}
    \includegraphics[width=\linewidth]{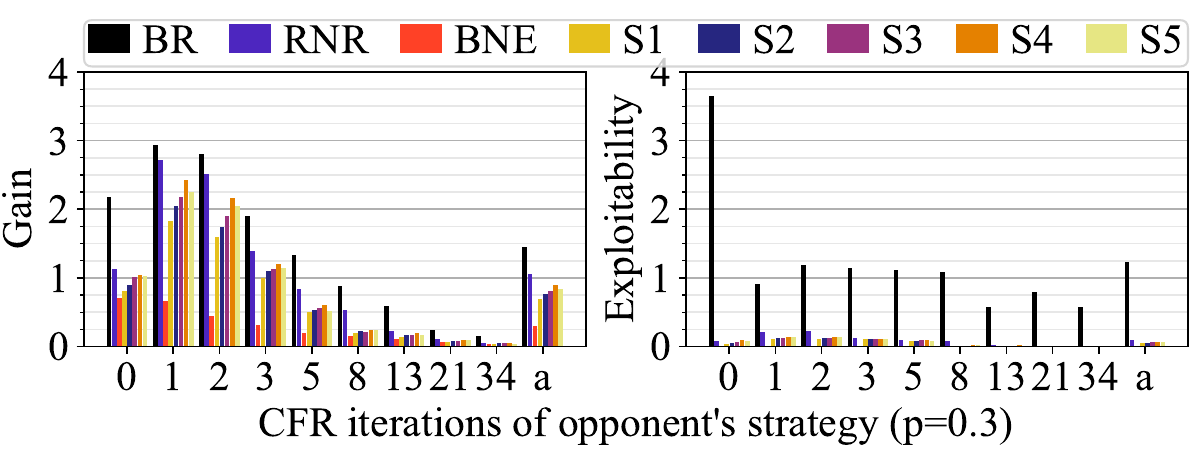}
    \includegraphics[width=\linewidth]{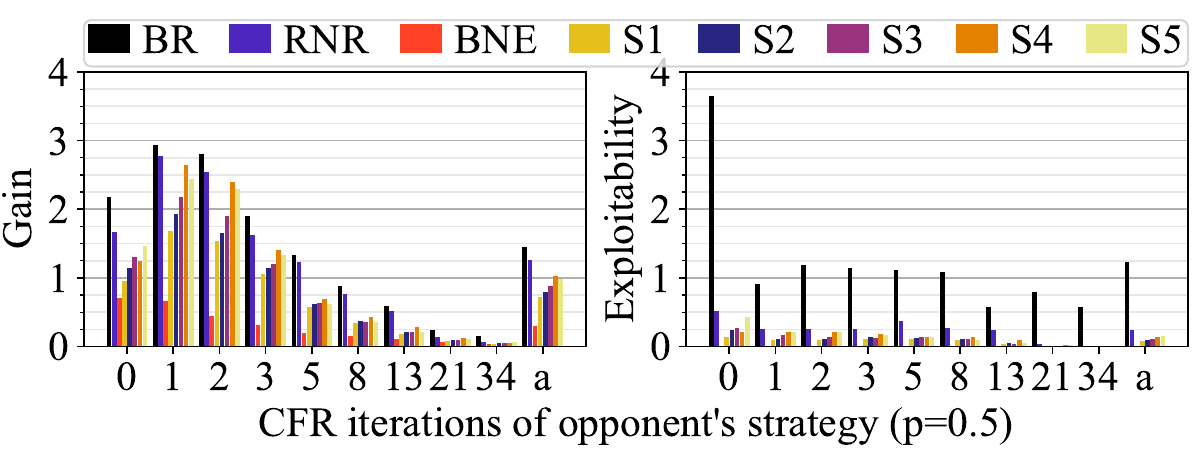}
    \includegraphics[width=\linewidth]{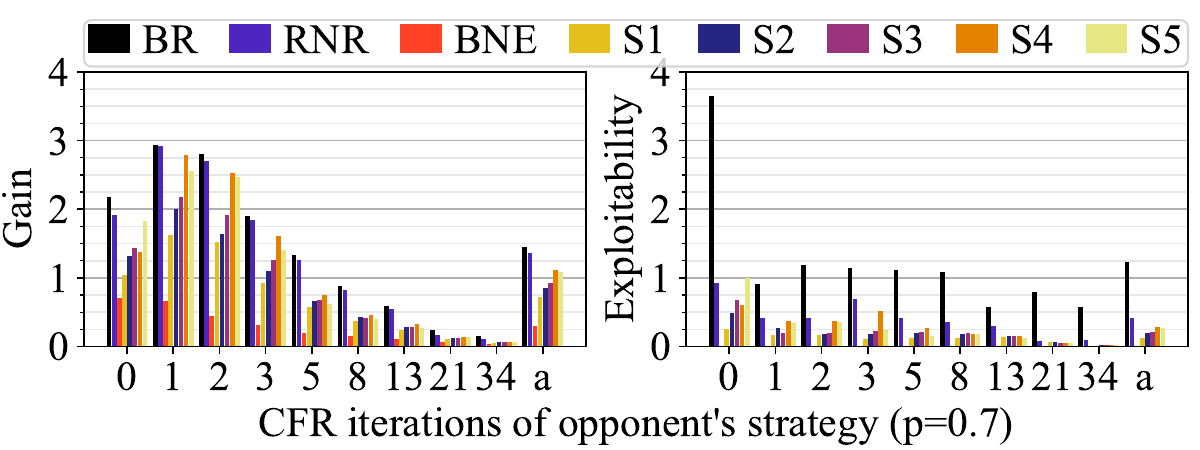}
    \includegraphics[width=\linewidth]{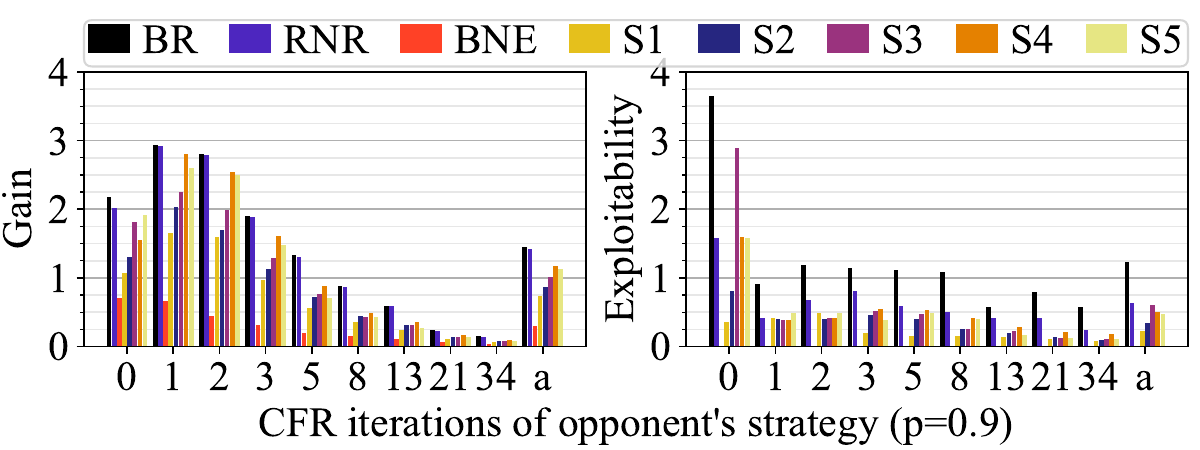}
    \caption{Additional results for CDRNR showing the performance of CDRNR with varying step-size. Generated on Leduc Hold'em.}
\end{figure}

\begin{figure}[h]
    \centering
    \includegraphics[width=\linewidth]{graphics/leduc_results_01.pdf}
    \includegraphics[width=\linewidth]{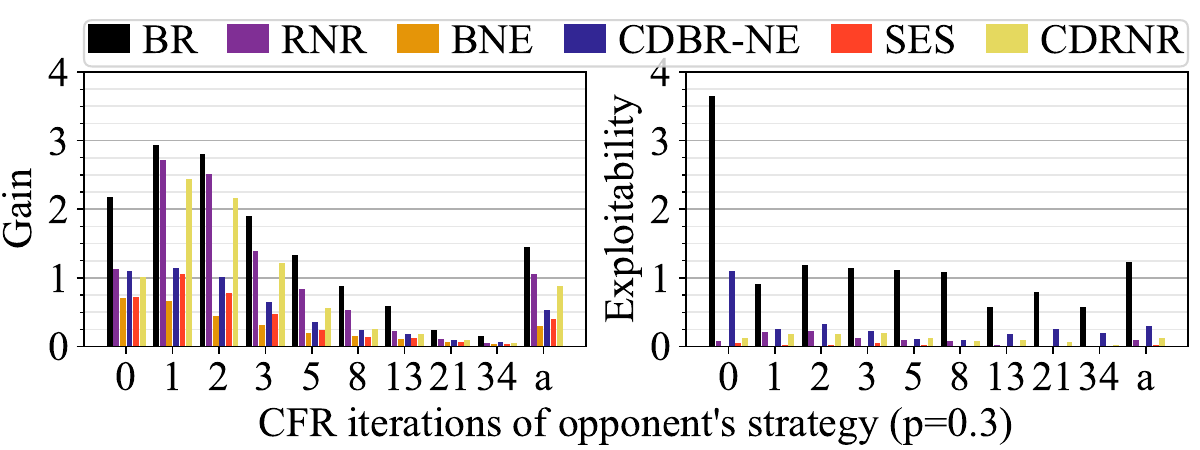}
    \includegraphics[width=\linewidth]{graphics/leduc_results_05.pdf}
    \includegraphics[width=\linewidth]{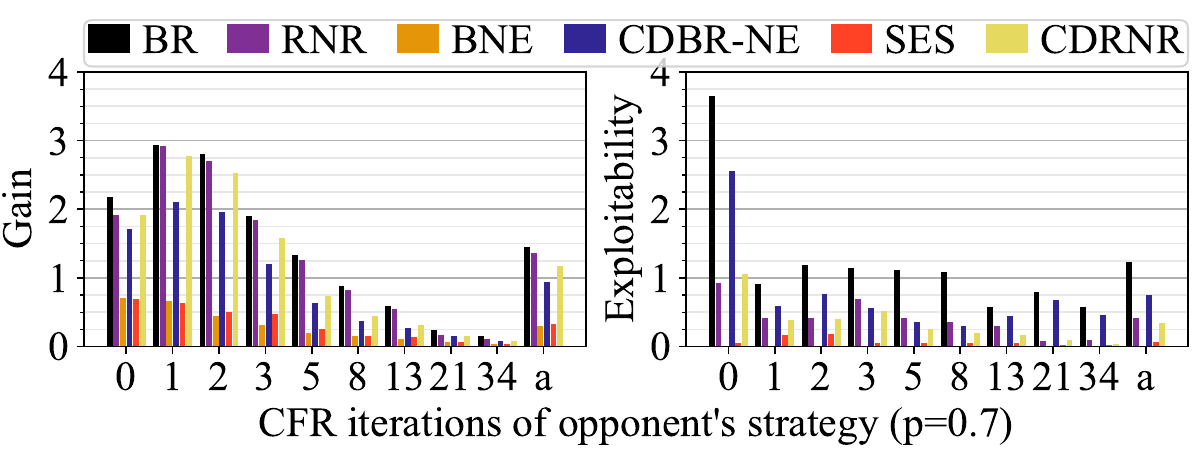}
    \includegraphics[width=\linewidth]{graphics/leduc_results_09.pdf}
    \caption{Additional results for CDRNR with different values of $p$. Generated on Leduc Hold'em split only by the round.}
\end{figure}

\begin{figure}
    \centering
    \includegraphics[width=\linewidth]{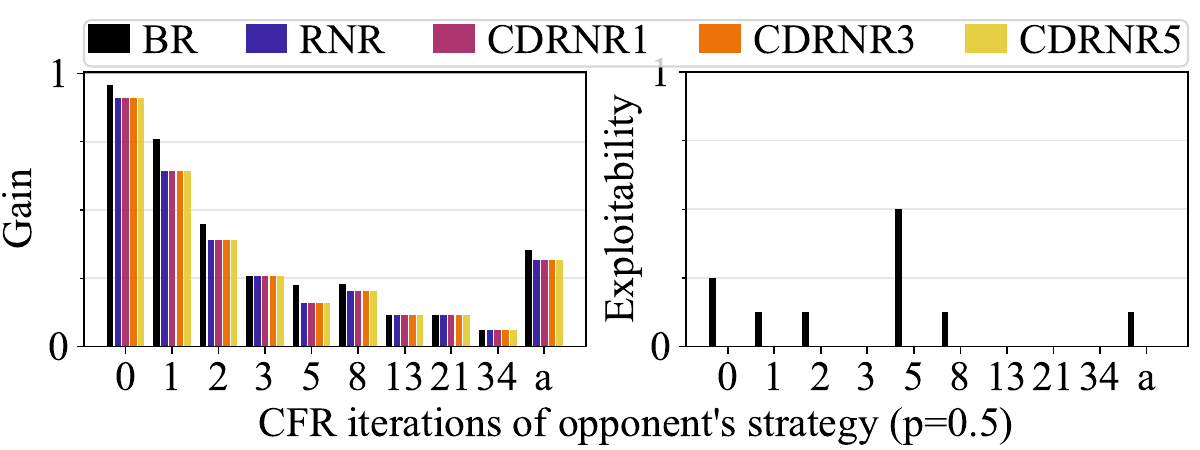}
    \caption{Additional result on Liar's dice. For every $p$ it exactly mimics the RNR so we only show one value.}
\end{figure}

\paragraph{Repeated RPS}
In Figure~\ref{fig:rps}, we show the strategy sets recovered for all possible $p$ against one strategy in two round biased RPS where after the round information is revealed. As we explained before, we can see that SES cannot gain anything in a game where only information imperfections are simultaneous moves. Exp-strat can exploit only the second round, and it gains half of the maximum, while the other algorithms can gain the maximum and are more or less successful in achieving the best trade-off. The full gadget is the best, followed by the other gadgets without theoretical guarantees, and then by a combination of Nash and CDBR.
\begin{figure}
    \centering
    \includegraphics[width=\linewidth]{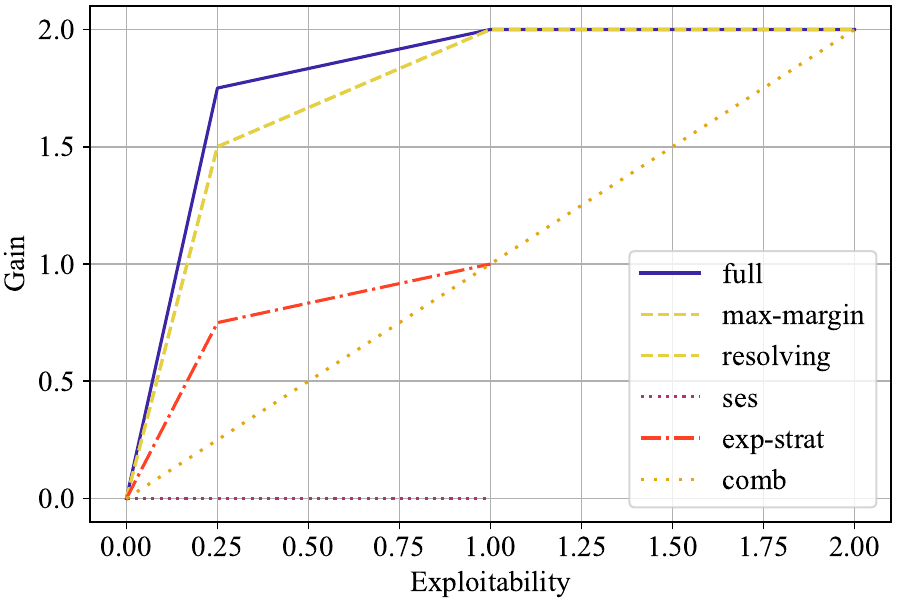}
    \caption{Results showing gain and exploitability trade-off in two round biased RPS. Max-margin and resolving gadget overlaps}
    \label{fig:rps}
\end{figure}

\section{SES Bound}
The bound in SES \cite{liu2022safe} is

\begin{theorem}
    Let $\mathbb{S}$ be a disjoint set of subgames S. Let $\sigma^* = \langle \sigma^*_1, \sigma^*_2\rangle$  be the NE where player \pr's strategy is constrained to be the same with $\sigma_\pr$ outside $\mathbb{S}$. Define $\Delta = \max_{S \in \mathbb{S}, I_\ps^i \in S_{top}}|CBV_\ps^{\sigma^*_\pr}(I_\ps^i) - v_\ps^\sigma(I_\ps^i)|$. Let $\tilde{p}(I_\ps^i)$ be the reach probability given by $\sigma_\ps^*$. Let $\hat{p}(I_\ps^i)$ be the estimation of reach probability $p(I_\ps^i)$ given by the real opponent strategy. Define $\tau = \max_{S \in \mathbb{S}, I_\ps^i \in S_{top}}|\frac{\hat{p}(I_\ps^i) - \tilde{p}(I_\ps^i)}{\tilde{p}(I_\ps^i)}|$. Whenever $1 - (2\tau + 1)\alpha > 0$, the exploitability bound is given by:
\end{theorem}

\[
    \mathcal{E}(\sigma'_\pr) \leq \mathcal{E}(\sigma^*_\pr) + \frac{2}{1 - (2\tau + 1)\alpha}\Delta
\]

We switched the players since authors in the previous work use player $\ps$ as the rational player.
 
We can see that the bound relies on the estimation being close to an equilibrium strategy defined by authors  as $\tau$. However, it does max over all the differences in reaches to the subgame, and in practice, some of the reaches will be very different, resulting in a large value of $\tau$. To demonstrate the difference, we assume the opponent model plays such that some action difference from equilibrium is 1, which is the highest it can be, and hence $\tau = 1$. Parameter $\alpha$ in SES directly matches $p$. For $\alpha = 0$, the bound is the same as in the max-margin gadget, and $\tau$ is disregarded. However, as $\alpha$ increases, the bound steeply rises, and as $\alpha$ goes in the limit to $\frac{1}{3}$, the bound goes to infinity, and for any larger $\alpha$, the bound says nothing. In comparison, our bound does not have this problem, and in the same setup, with $p = 0.5$, our bound still limits the exploitability by exactly the gain achieved. Note that since in SES, they do not account for errors in value function and errors in resolving, for this comparison only, we also omitted error terms caused by those errors.
\fi

\end{document}